\documentclass[11pt]{article}

\newtheorem{theorem}{Theorem:}
\newtheorem{proposition}{Proposition:}
\newtheorem{definition}{Definition:}

\newenvironment{proof}{\textbf{Proof:}}{\hfill $\Box$}

\usepackage{a4wide}
\usepackage{amsmath}
 

\newcommand{\RR}{\mathbb{R}}
\newcommand{\CC}{\mathbb{C}}

\newcommand{\suprime}[1]{{\LARGE To be Done}}
\newcommand{\Un}{{\mbox{U}(n)}}
\newcommand{\iii}{\boldsymbol{\imath}}

\newcommand{\trace}{\mbox{trace}}

\newcommand{\nbar}{{\overline n}}

\newcommand{\ket}[1]{\left|#1\right\rangle}
\newcommand{\bra}[1]{\left\langle #1\right|}

\newcommand{\bbibitem}[2][]{\hspace{10pt}}

\usepackage{latexsym}
 \usepackage{amsfonts,amssymb}

\usepackage{graphicx}
\usepackage{psfrag,epsfig}
\usepackage{epstopdf}

\usepackage{xcolor}
\usepackage{hyperref}

 \newcommand{\BM}[1]{\left[ \begin{array}{#1}}
 \newcommand{\EM}{\end{array} \right]}

\title 
{
Gate generation for open quantum systems via   a monotonic algorithm with  time optimization}


\author 
{ Paulo S\'ergio PEREIRA DA SILVA$^\sharp$ \\
 Pierre ROUCHON$^\S$ \\
{$^\sharp$ University of S\~ao Paulo -- USP -- Brazil}\\
{$\S$ Laboratoire de Physique de l’Ecole normale sup\'{e}rieure,}\\
{ Mines Paris-PSL, Inria, ENS-PSL, Universit\'{e} PSL, CNRS, Paris. }}


\begin{document} 
\bibliographystyle{plain}
\maketitle
\begin{abstract}
 We present a monotonic numerical algorithm including time optimization for generating quantum gates  for open systems. Such  systems  are assumed to be  governed   by Lindblad master equations for the  density operators on a large  Hilbert-space whereas the quantum gates are relative  to a sub-space of small  dimension.  Starting from an initial seed of the control input,  this algorithm consists in the repetition  of  the following two steps producing a new control input: (A) backwards integration of adjoint Lindblad-Master equations (in the Heisenberg-picture) from a set of final conditions encoding the quantum gate to generate;  (B)   forward integration of    Lindblad-Master equations in closed-loop where  a Lyapunov based control produced the new control input. The numerical stability is ensured by the stability of both the open-loop adjoint  backward system and the  forward closed-loop system. A clock-control input  can be added to the usual control input. The obtained  monotonic algorithm  allows then to optimise  not only the shape of the control imput, but also the gate time.   Preliminary numerical implementations   indicate that this algorithm  is well suited for cat-qubit gates, where  Hilbert-space dimensions (2 for the Z-gate and 4 for the CNOT-gate)  are much smaller than the dimension  of the  physical Hilbert-space involving mainly Fock-states (typically 20 or  larger for a single cat-qubit). This monotonic algorithm,  based on Lyapunov control techniques,  is shown to have  a straightforward  interpretation in terms of optimal control: its  stationary conditions  coincides with  the first-order optimality   conditions  for a  cost depending linearly  on the final values of the quantum states.

\end{abstract}
  \textbf{Acknowledgment:} This project has received funding from the European Research Council (ERC) under the European Union's Horizon 2020 research and innovation programme (grant agreement No. [884762]).

\section{Introduction}
\label{sIntro}

In theory, optimal control could provide fast solutions for state preparation and quantum gate generation \cite{PalaoK2002PRL,PalaoK2003PRA,GlaserBoscainCalarcoEtAl2015}.
The complexity of implementation of optimal control grows faster with the Hilbert dimension $n$ \cite{SchirF2011NJoP}). This motivates the research of other methods like Lyapunov stabilization, that can tackle large  dimensions   \cite{L4,L3,L1,L2,L5,L6,LS1,SilPerRou14,SilPerRou16}.
Note that Lyapunov stabilization of quantum systems appears also in the infinity-dimensional context (for instance for ensemble control of half-spin systems) \cite{BEAPERROU12,MACPERROU22}.

Several numerical algorithms have been developed for tackling quantum control. We shall call by \emph{Piecewise-Constant} case for the algorithms that provide piecewice-constant control pulses and by \emph{Smooth-Case } for the algorithms that provide smooth control inputs.
For instance, for  the Piecewise-Constant case one may consider the {Krotov} method  \cite{SchirF2011NJoP},
GRAPE (of first and second orders) \cite{KHANEJA2005,SecondGRAPE},
CRAB \cite{CRAB}, and the piecewise constant algorithm based on Lyapunov techniques \cite{PerSilRou19}. For the \emph{smooth} case, one may consider
{GOAT} \cite{MacShaTanFra15}, and the Matlab open code available for   RIGA (\cite{CODE_OCEAN_SMOOTH}. The Krotov method has also a \emph{Smooth-Case} version, called here simply by {\emph Krotov} method,  that is strongly related to algorithm that is presented in this paper. The reader may refer to the survey papers \cite{Koch2016,Goerz2014}  for the description of Krotov method. One may say that the algorithm presented in this work (without clock control) is very close to the  \emph{Krotov} method, at least in the case of the so called \emph{sequential update} of the
control, which ensures a monotonic behaviour of such method\footnote{ \emph{Sequential update} of the control means that the control pulses that are applied to the system in a step $\ell$ of the algorithm are updated ``on the fly'', that is, not only in the end of each step as is done in the classic Krotov method and also in GRAPE.}. To be monotonic in this case is a property that is analogous to the non-increasing property of the Lyapunov function in the context of the algorithm that is presented in this paper. The  contributions are
  \begin{itemize}
    \item to generalise such  monotonic algorithms by considering the optimisation of the shape of the control input and the gate time simultaneously (see  section \ref{sClock}).
    \item  to implement   such generalisation  on physical  case-studies of bosonic qubits (see section~\ref{sExample})  where the dimension of the underlying Hilbert-space is far much larger (578  in the numerical computations of  figures~\ref{FC} and~\ref{FD}) than the  size  of the orthonormal sets defining the gate ($4$ for a CNOT gate between two cat-qubits).
  \end{itemize}

Particularly relevant for the present work is the algorithm  \emph{RIGA (Reference Input Generation Algorithm)}\cite{CODE_OCEAN_SMOOTH} that  generates gate of small dimension $\bar n$  for closed quantum systems in  an Hilbert-space of larger  dimension $n\geq \bar n$   governed  by a  Schrodinger dynamics:
\begin{equation}
\label{eSh}
 \dot X(t) = -\iii (H_0 + \sum_{k=1}^{m} u_k  H_k) X(t)
\end{equation}
where the $H_k$ are the Hermitian operators, $u_k$ are scalar  control inputs    and $X(t) \in \Un$ is the propagator. The quantum gate in this case is represented by some set  of initial vectors  $\{|e_i\rangle, i=1, \ldots, \nbar \}$ and  a set of final vectors $\{|f_i\rangle, i=1, \ldots, \nbar \}$, both orthonormal subsets of $\CC^{n}$, with $\nbar \leq n$. The  gate generation relies on finding a  control input $u=(u_1,\ldots,u_m)$  steering from initial value $X(0) = I$ to  a final value $X_{goal}$ where
$X_{goal} |e_i\rangle = |f_i\rangle$, for $i=1, \ldots, \nbar$, up to some error that is measured by the so called \emph{gate fidelity}. This is equivalent
to the following steering problem:
\begin{definition}
\label{Def1}
 Let $\{|e_i\rangle, i=1, \ldots, \nbar \}$ and $\{|f_i\rangle, i=1, \ldots, \nbar \}$ be two orthonormal subsets of $\CC^{n}$.
 The problem of quantum gate generation is to find  a gate time $T_f >0$ and a time-varying   control input $u:[0, T_f] \rightarrow \RR^m$ such that
  the solution of~\eqref{eSh} starting form $X(0)=I$ verifies $X(T_f) |e_i\rangle=|f_i\rangle$  for $i=1, \ldots, \nbar$ up to some admissible error called  gate-fidelity.
\end{definition}

For a prescribed gate time $T_f$ and  gate,   RIGA is a monotonic algorithm  improving  the steering control input $[0,T_f]\ni t \mapsto u(t)$ from an initial guess $[0,T_f]\ni t \mapsto \overline{u}^0(t)$. .  Each step $\ell$ of RIGA is  as follows. Given the control input ${\overline u}^{\ell-1}(t)$ defined on $[0, T_f]$, one obtains a reference trajectory of the propagator $X(t)$  by integrating the system backwards from the final condition $X_{goal}$. A Lyapunov based  tracking control is then implemented, and a tracking control ${\overline u}^{\ell}$ is obtained by integrating forwards  the closed loop system from $X(0)=I$. RIGA is essentially the repetition of this process until an admissible gate fidelity is obtained. The algorithm is shown to be monotonic in the sense that the infidelity that is measured by the Lyapunov function is nonincreasing along all the steps of the algorithm. Furthermore,  strong convergence results of RIGA are available  for controllable closed systems~\cite{PerSilRou19}. Namely, for $T_f$ big enough, RIGA converges to an exact solution of the quantum gate generation problem. Furthermore, numerical experimentations have  shown that RIGA generates small control inputs with a bandwidth that contains the natural frequencies of the system, at least if the control seed ${\overline u}^0(t)$ does not contain unnecessary high frequencies. However, it must obey some generic
conditions that are fulfilled for control profile including enough  harmonics of small amplitude\footnote{See section  \ref{sSeed} about the choice of the seed  ${\overline u}^0(t)$ for the proposed algorithm.}.

The algorithm that will be presented in this work was obtained directly from RIGA based on  a Fock-Liouville representation of  open quantum  systems.
 When re-transformed back into its original representation (of a Lindblad-Master equation) this algorithm have exhibited nice
 physical interpretations, including the presence of the adjoint Lindblad-Master equation (in the Heisenberg picture).
 We have chosen to present the results directly in its final form,  we shall not present here how it can be obtained from RIGA \footnote{The reader may refer to \cite{CBA2024} for these aspects of RIGA as well as a comparison of RIGA and GRAPE.}.

 So the algorithm  which is the main contribution of this work can be applied for open control systems described by Lindblad Master equations  on a Hilbert space of arbitrary dimension $n$ and  for a quantum gate of  arbitrary dimension $\nbar \leq n$. The first part of each step $\ell$ of this new algorithm consists in integrating backwards $\nbar^2$ copies of the adjoint Lindblad equation  from the final conditions $J_\sigma(T_f), \sigma \in \Lambda$ (observables in the Heisenberg  picture).  The second part of the algorithm consists in integrating $\nbar^2$ copies of the Lindblad Master equations with initial conditions $\rho_\sigma(0), \sigma \in \Lambda$ in closed loop with a Lyapunov based tracking control law. The $\nbar^2$ final conditions  $J_\sigma(T_f)$ and the $\nbar^2$ initial conditions
$\rho_\sigma(0), \sigma \in \Lambda$
 are projectors onto adequate pure states such that  the quantum gate operations are ensured in an analogous way that is considered in quantum tomography context \cite{NieChu11}. The adequate Lyapunov function  for the tracking control is:
 \[
    {\mathcal V}(t)  =   \nbar^2  - \sum_{\sigma \in \Lambda} \trace \left( J_\sigma (t) \rho_\sigma(t)  \right)
 \]
 It is easy to show that $\mathcal V(T_f)$ corresponds to the sum of  final individual gate infidelities of each member of the collection of $\nbar^2$ systems.
 Inside each step $\ell$, $\mathcal V(t)$ is nonnegative, nonincreasing  and $\mathcal V(T_f)$ it is equal to zero if and only if $\rho_\sigma(T_f) = J_\sigma(T_f), \sigma \in \Lambda$, that is, the quantum gate was exactly generated. Furthermore, this algorithm is monotonic in the sense that  the sequence  defined by the $\mathcal V(T_f)$ that are obtained along the successive steps is nonincreasing.

 In this paper, we also show how to include a clock control that allows to incorporate an extra (virtual) control which may be useful for finding an ``optimal'' final time $T_f$ of the gate.
 The clock control is in fact a virtual input $v_0$ that controls the running of a virtual time $\tau$ according to the differential law $\frac{d t}{d \tau} = {(1+v_0(t))}$. Since the algorithm presented here is an adaptation of RIGA for open systems described by a Lindblad-Master equations, and since
 RIGA admits strong convergence results for the  controllable unitary case, one expects that, when there exists a  control $u^*$ achieving exactly the gate and when the first variation around  the trajectory associated to $u^*$ is somehow controllable,  the algorithm  will converge to this set $u^*$, at last locally. This last conjecture will be the subject of a future research.

 We also show in this paper that this algorithm may be also regarded as an iterative method that converges to  a control input  that satisfies  the first order stationary conditions of an optimal control problem associated to the cost function ${\mathcal V}(T_f)  $.
 It also important to stress that the algorithm structure  is naturally adapted for array processors that could deal with each of the $\nbar^2$ copies of the system. Furthermore, the use of GPUs is strongly indicated since all the operations (including the 4-th order Runge-Kuta integration scheme) relies
 on the multiplication and the sum of $n\times n$ matrices. There is no gradient computation in the process, which seems to be useful in the application on the control design of quantum systems. The numerical stability is ensured by the stability of both the adjoint system and the one of the original Lindblad equation.

 Two examples of confined Cat-Qubit gates taken from \cite{Maz14,GauSarMir22} are presented.
 A first example of a $Z$-gate, recovering the existence of an optimal final time $T_f^*$ that was obtained in  \cite{Maz14} with  an adiabatic constant control. In this first example the fidelity of the adiabatic control is not far from the one that was produced by our algorithm, at least when both gate-times coincides to $T_f^*$. A second example of a CNOT-gate of much greater dimension is also studied, showing also an optimal $T_f^*$. However, for the second example, the shape of the control pulses are much more important. The problem of optimising the gate-time of constant adiabatic control produces an inferior fidelity (and different gate-time) than the problem of optimising both the shape and gate-time that is considered by our algorithm. The infidelity of the results of the constant adiabatic control with optimal $T_f$ is $29.2\%$ higher than the one of our algorithm.


 The paper is organised as follows. In section \ref{sLindblad} we state some notations and also some known results about the Lindblad master equation and its adjoint formulation. In section \ref{sQGGP} we state the quantum gate generation problem in the context of the density operators that appear in Lindblad equations. In section \ref{sAlgorithm} we present the algorithm when the gate time is prescribed and  show that the  algorithm is  an iterative method converging to the first order stationary conditions of an optimal control problem. In section \ref{sClock} we show how one can include the clock-control in this algorithm, which is useful for optimizing the final time $T_f$ of the gate.  Finally, in section \ref{sExample} we shall present two worked examples for  cat-qubit gates along with the results of numerical experiments with the proposed algorithm.

 Throughout  this work, we assume that the underlying Hilbert-spaces are   of finite dimension. However as the chosen formulation uses the language of operators, the various formulas and algorithms must certainly admit a meaning in infinite dimension with suitable choices of functional spaces.

\section{The Lindblad-Master equation}
\label{sLindblad}

 We will consider an open quantum system that is described by a Lindblad Master equation \cite{Bre02}. We recall that the state of a Lindblad Master equation is a $n \times n$ density matrix $\rho(t)$ which is a positive definite hermitian matrix of unitary trace. Let $u =(u_1, \ldots, u_m) \in \RR^m$, and let $\rho$ be a density matrix in $\CC^{n \times n}$. We may define the super-operator\footnote{To avoid notation confusion in the sequel with indice $i$  we use the bold symbol $\iii$ for  $\sqrt{-1}$.}
\begin{equation*}
 \mathcal{L}_{u} (\rho)= -{\iii} \left[H_0 + \sum_{k=1}^{m}  u_{k} H_k, \rho \right]+
  \sum_{q=1}^{p} L_q \rho L_q^\dag - \frac{1}{2} \left\{ L^\dag_q L_q, \rho \right\}
\end{equation*}
The Lindblad master equation will be denoted by:
 \begin{equation*}
 \frac{d\rho(t)}{dt} = \mathcal{L}_{u(t)} (\rho(t))
\end{equation*}
We shall also consider the adjoint Lindblad-Master equation. The state $J(t)$ of the adjoint Lindblad Master equation, called \emph{Observable},  is a $n \times n$ hermitian  matrix $J(t)$. In our algorithm, such observable $J(t)$,  will be such that its spectrum  is always  contained in $[0, 1]$.
The adjoint Lindblad equation given below considers in fact the Heisenberg view-point of quantum mechanics with the super-operator
\begin{equation}
\label{eKu}
 {\mathcal L}^*_u ( J ) =-{\iii} \left[-H_0- \sum_{k=1}^{m}  u_{k} H_k, J \right] +\\
 \sum_{q=1}^{p} L_q^\dag J L_q - \frac{1}{2} \left\{ L^\dag_q L_q, J \right\}
\end{equation}
defining the adjoint Lindblad equation \cite{Bre02}:
 \begin{equation*}
 \frac{ dJ(t)}{dt} = {\mathcal{L}}^*_{u(t)} (J(t))
\end{equation*}
It is well known that, if one takes an observable $J(0)$ and compute the solution of
the Lindblad-Master equation $\rho(t)$, and after that one computes the expectation value of the observable, this entire process is equivalent
to take the initial condition of the state and compute the expectation value of the observable $J(t)$. In other words:
\begin{equation}
\label{eStar}
\trace( J(0) \rho(t)) = \trace(J(t)  \rho(0))
\end{equation}
The condition  \eqref{eStar}  will be important in order to ensure that our algorithm is monotonic.

\section{Quantum gate generation problem }
\label{sQGGP}

As said in the introduction, for closed quantum system and considering the unitary evolution of the Schrodinger equation \eqref{eSh},
the quantum gate generation problem can be stated as being the steering problem of Def. \ref{Def1}. For open quantum systems, in the case
where the state is a density operator, the quantum gate generation problem may be defined in a way that is similar to the quantum Tomography context
\cite{NieChu11}. This consists in constructing a set of pure states assuring the complete definition of the gate.

\begin{definition} \label{dD1}
 (Quantum Gate Generation Problem)
 Consider that
  $\{|e_i\rangle, i=1, \ldots, \nbar \}$ and $\{|f_i\rangle, i=1, \ldots, \nbar \}$ are two orthonormal subsets of $\CC^{n}$
 with $\nbar \leq n$. Let
  \begin{itemize}
     \item
        $|e_{ijR}\rangle =  \frac{1}{\sqrt{ 2}} ( |e_i\rangle + |e_j\rangle)$, $i > j$, and
        $|e_{ijI}\rangle = \frac{1}{\sqrt{ 2}} (|e_i\rangle + \iii |e_j\rangle)$, $i > j$
         \item
        $|f_{ijR}\rangle =  \frac{1}{\sqrt{ 2}} ( |f_i\rangle + |f_j\rangle)$, $i > j$, and
        $|f_{ijI}\rangle = \frac{1}{\sqrt{ 2}} (|f_i\rangle + \iii |f_j\rangle)$, $i > j$.
  \end{itemize}
 The quantum gate generation problem consists in
 finding  a set of $m$ control pulses $u:[0, T_f] \rightarrow \RR^m$ such that:
 \begin{itemize}
   \item [(i)] The state $\rho(t)$ is steered from the  state
     $|e_{i}\rangle \langle e_{i}|$ at $t=0$ to   the state $|f_{i}\rangle \langle f_{i}|$, at $t=T_f$ for $i =1, \ldots, \nbar$

  \item [(ii)] One must also steer all the $|e_{ijR}\rangle \langle e_{ijR}|$ at $t=0$ to  $|f_{ijR}\rangle \langle f_{ijR}|$
        at $t=T_f$.

 \item [(iii)] One must also steer all the $|e_{ijI}\rangle \langle e_{ijI}|$ at $t=0$ to  $|f_{ijI}\rangle \langle f_{ijI}|$
        at $t=T_f$.

  \end{itemize}

 \end{definition}
  We stress that all the final and initial conditions that defines the gate are pure states.
  We shall consider a set of $\nbar^2$ multi-indices $\sigma$ elements of
  \[
  \Lambda = \left\{i, ijR, ijI :  i,j \in \{1, \ldots , \nbar\}, i > j \right\}.
  \]
  for indexing the above  family of initial and final conditions.

\section{Monotonic  algorithm with a prescribed gate-time }
\label{sAlgorithm}
This section is devoted to the description of the proposed algorithm.

\subsection{Main  definitions}

Decompose the control input $u$ as $ u(t)= {\overline u}(t) +  {\widetilde u}(t)$ where ${\overline u}(t)$ appears in $\nbar^2$ copies of the (minus) adjoint Lindblad Master equation with different final conditions indexed by
$\sigma \in \Lambda$:
\begin{subequations}
\label{eReferenceLindblad}
\begin{eqnarray}
  \frac { d J_\sigma}{dt} (t) & = & -{\mathcal L}^*_{{\overline u}(t)} \left({J}_\sigma (t)\right)\\
  J_\sigma(T_f) & =  & \Pi_{|\phi_\sigma\rangle} = |\phi_\sigma\rangle \langle \phi_\sigma|,~ \sigma \in \Lambda
\end{eqnarray}
\end{subequations}
where
\[
  |\phi_\sigma\rangle = \left\{
 \begin{array}{l}
|f_i\rangle \mbox{, if $\sigma = i \in \{1, \ldots, \bar n\}$}\\
\frac{|f_i\rangle + |f_j\rangle}{\sqrt{2}}, \mbox{ if $\sigma = ijR, i,j \in \{1, \ldots, \bar n\}$}, i >j \\
\frac{|f_i\rangle + \iii |f_j\rangle}{\sqrt{2}}, \mbox{ if $\sigma = ijI, i,j \in \{1, \ldots, \bar n\}$}, i >j
 \end{array}
 \right.
\]
Then $u={\overline u}+  {\widetilde u}$ appears in   $\nbar^2$ copies of the system with different initial conditions:
\begin{subequations}
\label{eClosedLindblad}
\begin{eqnarray}
  \frac { d{\rho}_\sigma(t)}{dt} & = & \mathcal{L}_{{\overline u}(t) +  {\widetilde u}(t)} \left({ \rho}_\sigma(t)\right)\\
   { \rho}_\sigma (0)& = & \Pi_{|\epsilon_\sigma\rangle}= |\epsilon_\sigma\rangle \langle \epsilon_\sigma|,~ \sigma \in \Lambda
  \end{eqnarray}
\end{subequations}
where
\[
 |\epsilon_\sigma\rangle = \left\{
 \begin{array}{l}
|e_i\rangle \mbox{, if $\sigma = i \in \{1, \ldots, \bar n\}$}\\
\frac{|e_i\rangle + |e_j\rangle}{\sqrt{2}}, \mbox{ if $\sigma = ijR, i,j \in \{1, \ldots, \bar n\}$}, i >j \\
\frac{|e_i\rangle + \iii |e_j\rangle}{\sqrt{2}}, \mbox{ if $\sigma = ijI, i,j \in \{1, \ldots, \bar n\}$}, i >j
 \end{array}
 \right.
\]

\subsection{The Lyapunov Function}

We shall apply a Lyapunov based feedback law in the input $\widetilde u$ to be described in the sequel.
For this, consider the Lyapunov function:
 \begin{eqnarray*}
 \mathcal V(t) & =  & \nbar^2 - \sum_{\sigma \in \Lambda} \trace( {J}_\sigma(t) {\rho}_\sigma(t))
\end{eqnarray*}
where $J_{\sigma}(t)$ and $\rho_\sigma(t)$ are solution of~\eqref{eReferenceLindblad} and~\eqref{eClosedLindblad}, respectively.
 It is clear $\mathcal V(T_f)$ is the sum of the gate-infidelities of the members of the system
for all $\sigma \in \Sigma$ since
\[
 \trace ( \Pi_{|\phi_\sigma\rangle} \rho ) = \trace (|\phi_\sigma\rangle \langle \phi_\sigma| \rho) =  \langle \phi_\sigma| \rho |\phi_\sigma\rangle
 \]
 which is  the fidelity of $\rho$ with respect to the pure state $|\phi_\sigma\rangle$.

The following proposition explains why $\mathcal V(t)$ is a convenient Lyapunov function for the quantum gate generation problem.
\begin{proposition} Take $\sigma\in\Lambda$ and  $(J_\sigma,\rho_\sigma)$ solutions of~(\ref{eReferenceLindblad},\ref{eClosedLindblad}). Then necessarily  for any $t\in[0,T_f]$,  $0\leq \trace( {J}_\sigma(t) {\rho}_\sigma(t))\leq 1$. Thus $\mathcal{V}(t) \geq 0$.
Moreover if $\mathcal{V}(T_f)=0$ then for all $\sigma\in\Lambda$,  $\rho_\sigma(T_f)= \ket{\phi_\sigma}$.
\end{proposition}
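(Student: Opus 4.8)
The plan is to first control the spectrum of $J_\sigma(t)$, then read off the two‑sided bound on $\trace(J_\sigma(t)\rho_\sigma(t))$, and finally analyse the saturation case $\mathcal V(T_f)=0$.

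\emph{Step 1: operator–interval bound on $J_\sigma$.} Let $\Phi_{T_f,t}$ be the forward propagator of the controlled Lindblad equation $\frac{d\rho}{dt}=\mathcal L_{\overline u(t)}(\rho)$ from time $t$ to time $T_f$; it is completely positive and trace preserving \cite{Bre02}. Differentiating the two–time semigroup identity in its lower argument and passing to Hilbert--Schmidt adjoints shows that $t\mapsto\Phi_{T_f,t}^{*}(\Pi_{|\phi_\sigma\rangle})$ solves $\frac{dJ}{dt}=-\mathcal L^{*}_{\overline u(t)}(J)$ with terminal value $J(T_f)=\Pi_{|\phi_\sigma\rangle}$, so by uniqueness of solutions it coincides with $J_\sigma$ of \eqref{eReferenceLindblad}. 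Now $\Phi_{T_f,t}^{*}$ is completely positive and unital, the latter being dual to trace preservation ($\Phi_{T_f,t}^{*}(I)=I$). Since each $|\phi_\sigma\rangle$ is a unit vector, $0\le\Pi_{|\phi_\sigma\rangle}\le I$ in the Loewner order; applying the positive unital map $\Phi_{T_f,t}^{*}$ gives $0\le J_\sigma(t)$ and $I-J_\sigma(t)=\Phi_{T_f,t}^{*}(I-\Pi_{|\phi_\sigma\rangle})\ge 0$, hence $0\le J_\sigma(t)\le I$ for every $t\in[0,T_f]$. Equivalently, the substitution $s=T_f-t$ recasts \eqref{eReferenceLindblad} as a genuine forward adjoint Lindblad equation, whose flow preserves the interval $[0,I]$; this is the precise meaning of the remark in Section~\ref{sLindblad} that the spectrum of $J(t)$ stays in $[0,1]$.

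\emph{Step 2: $\mathcal V(t)\ge 0$.} The closed–loop dynamics \eqref{eClosedLindblad} is again generated by a Lindblad super–operator and $\rho_\sigma(0)=\Pi_{|\epsilon_\sigma\rangle}$ is a density operator, so $\rho_\sigma(t)$ is a density operator for all $t$. From $J_\sigma(t)\ge 0$ and $\rho_\sigma(t)\ge 0$ we get $\trace(J_\sigma(t)\rho_\sigma(t))=\trace(J_\sigma(t)^{1/2}\rho_\sigma(t)J_\sigma(t)^{1/2})\ge 0$, and from $I-J_\sigma(t)\ge 0$ we get $\trace((I-J_\sigma(t))\rho_\sigma(t))\ge 0$, i.e. $\trace(J_\sigma(t)\rho_\sigma(t))\le\trace\rho_\sigma(t)=1$. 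Summing over the $\nbar^{2}$ indices $\sigma\in\Lambda$ yields $0\le\sum_{\sigma\in\Lambda}\trace(J_\sigma(t)\rho_\sigma(t))\le\nbar^{2}$, that is $\mathcal V(t)\ge 0$.

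\emph{Step 3: the case $\mathcal V(T_f)=0$.} Here $\sum_{\sigma\in\Lambda}\trace(J_\sigma(T_f)\rho_\sigma(T_f))=\nbar^{2}=|\Lambda|$ and each summand is at most $1$, so every summand equals $1$. With $J_\sigma(T_f)=|\phi_\sigma\rangle\langle\phi_\sigma|$ this reads $\langle\phi_\sigma|\rho_\sigma(T_f)|\phi_\sigma\rangle=1$. Setting $P=I-|\phi_\sigma\rangle\langle\phi_\sigma|\ge 0$ we obtain $\trace(P\rho_\sigma(T_f))=\trace\rho_\sigma(T_f)-\langle\phi_\sigma|\rho_\sigma(T_f)|\phi_\sigma\rangle=0$; since $P\ge 0$ and $\rho_\sigma(T_f)\ge 0$, this forces $P^{1/2}\rho_\sigma(T_f)^{1/2}=0$, hence $P\rho_\sigma(T_f)=0$, so $\mathrm{ran}\,\rho_\sigma(T_f)\subseteq\ker P=\CC|\phi_\sigma\rangle$; being a positive operator of trace one, $\rho_\sigma(T_f)$ must therefore equal $|\phi_\sigma\rangle\langle\phi_\sigma|$, which is what the statement abbreviates by $\rho_\sigma(T_f)=\ket{\phi_\sigma}$.

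The only step that is not bookkeeping is Step 1: identifying the backward‑in‑time adjoint evolution of \eqref{eReferenceLindblad} with the Heisenberg dual of a completely positive trace‑preserving Lindblad propagator, which is exactly what guarantees $\mathrm{spec}\,J_\sigma(t)\subset[0,1]$ and also underlies the duality identity \eqref{eStar}. Everything downstream is elementary positive‑operator algebra.
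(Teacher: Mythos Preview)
Your proof is correct and follows essentially the same approach as the paper: the key ingredient is that $\mathrm{spec}\,J_\sigma(t)\subset[0,1]$ combined with $\rho_\sigma(t)$ being a density operator, which is exactly what the paper invokes (citing \cite{SepulSR2010} for the spectral bound). You go further by giving a self-contained argument for that spectral bound via the CP--unital dual of the Lindblad propagator, and by spelling out the saturation analysis in Step~3, both of which the paper leaves to the reader.
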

The proof of this proposition just relies on the fact that for all $t\in[0,T_f]$, the spectrum of $J_\sigma(t)$ belongs to~$[0,1]$ according to~\cite{SepulSR2010} and that $\rho_\sigma(t)$ is a density operator.

Computing the time-derivative of the Lyapunov function, one gets:
\begin{eqnarray*}
\frac{d \mathcal V(t)}{dt}  & = & - \sum_{\sigma \in \Lambda} \trace\left(\frac{d J_\sigma}{dt}(t) {\rho}_\sigma(t) +  {J}_\sigma(t)  \frac{d{ \rho}_\sigma(t}{dt})\right) \\
           & = & - \sum_{\sigma \in \Lambda} \trace( -{\mathcal L}^*_{\overline u} ({J}_\sigma(t)) {\rho}_\sigma(t)) + \trace( {J}_\sigma(t) {\mathcal L}_{(\overline u + \widetilde u)} ( {\rho}_\sigma(t)))\\
           & = & -  \sum_{k=1}^m {\widetilde u}_k  \left[\sum_{\sigma \in \Lambda} {\trace (J_\sigma(t)  [-\iii H_k, \rho_\sigma(t)])} \right]
\end{eqnarray*}
In the computations above we have used the fact that ${\mathcal L}_u (\rho)$ is affine in $u$ and
 \[
 \trace (-{\mathcal L}^*_{\overline u} ({J}_\sigma) {\rho}_\sigma) + \trace( {J}_\sigma {\mathcal L}_{\overline u} ( {\rho}_\sigma)) = 0.
 \]
Assume that $[0,T_f]\ni t \mapsto \overline{u}(t)$ is given, then  the Lyapunov-based control with strictly positive gain  $g_k>0$ provides $\widetilde{u}$ via  the  forward integration of~\eqref{eClosedLindblad}:
\begin{eqnarray*}
u_k(t)  =  {\overline u}_k(t) +\underbrace{  g_k \left[\sum_{\sigma \in \Lambda} \trace (J_\sigma(t)  [-\iii H_k, \rho_\sigma(t)]) \right]
}_{\text{\normalsize ${\widetilde u}_k = g_k { F}_k(t)$} }
\end{eqnarray*}
By construction :
\[
\frac {d{\mathcal V}(t)}{dt} = - \sum_{k=1}^m g_k {{ F}_k}^2(t) \leq 0
\]
and so the Lyapunov function is nonincreasing inside a specific step $\ell$ of the algorithm.\\[0.3cm]

\subsection{The algorithm }
We are ready to state the main contribution of this paper which is the following algorithm.\\[0.3cm]
\textbf{BEGIN ALGORITHM}
\begin{itemize}
\item [$\sharp 1.$] Choose the seed input  ${\overline u}^0:[0, T_f] \rightarrow \RR^m$.\\
\textbf{BEGIN STEP $\ell$}.\\
  \begin{itemize}
 \item[$\sharp 2.$] Set ${\overline u}(\cdot) = {\overline u}^{\ell-1}(\cdot)$. \\
             Integrate (backwards) in $[0, T_f]$ the $\nbar^2$ copies of the adjoint
             system:\\
                 \begin{eqnarray*}
                  \frac { d J_\sigma}{dt} (t) & = & -{\mathcal L}^*_{{\overline u}(t)} \left({J}_\sigma (t)\right), \;\;\;
                  J_\sigma(T_f)  =   \Pi_{|\phi_\sigma\rangle} = |\phi_\sigma\rangle \langle \phi_\sigma|, \sigma \in \Lambda
                \end{eqnarray*}

\item [$\sharp 3.$]  Integrate (forward) in $[0, T_f]$ the $\nbar^2$ copies of the system in closed loop:

                    \begin{eqnarray*}
                  \frac {d{ \rho}_\sigma(t)}{dt} & = & \mathcal{L}_{u(t)} \left({ \rho}_\sigma(t)\right), \;\;\;
                   { \rho}_\sigma (0) =  \Pi_{|\epsilon_\sigma\rangle}, \sigma \in \Lambda \\
                  u_k(t) & = & {\overline u}_k(t) + \underbrace{\sum_{\sigma \in \Lambda} \trace (J_\sigma(t)  [-\iii H_k, \rho_\sigma(t)])}_{{\widetilde u}_k(t) = g_k F_k(t), ~ g_k >0}, k=1, \ldots, m
                  \end{eqnarray*}

                  Set ${\overline u}^{\ell}(\cdot) = u(\cdot) $ (closed loop input). \\
                   Notice that control  constraints can be included here  just by imposing   that   each  $u_k(t)$  remains between $u_k^{\min}$ and $u_k^{\max}$. \\
\item [$\sharp 4.$] If the final fidelity  is acceptable,
             then terminate the algorithm. Otherwise, execute step $\ell+1$.
 \end{itemize}
 \item [\  ] \textbf{END STEP $\ell$}
 \end{itemize}
 \textbf{END ALGORITHM}

\begin{theorem} \label{t1} The value of the Lyapunov function  $\mathcal V(T_f)$ obtained in the end of the step $\ell-1$ of the previous algorithm is equal to the initial value $\mathcal V(0)$ for step $\ell+1$.
In particular the Lyapunov function is non-increasing along all the steps of the algorithm.
\end{theorem}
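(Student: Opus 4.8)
The plan is to prove the invariance of the pairing $\sum_{\sigma\in\Lambda}\trace\!\left(J_\sigma(t)\rho_\sigma(t)\right)$ across the ``seam'' joining the end of a step to the beginning of the next one, and then to combine this with the intra-step decrease $\frac{d\mathcal V}{dt}=-\sum_{k=1}^m g_k F_k^2\le 0$ already established above. Since $\mathcal V$ is, by definition, $\nbar^2$ minus exactly that pairing, invariance of the pairing means $\mathcal V$ is transported unchanged from $t=T_f$ of one step to $t=0$ of the next; the Lyapunov decrease then finishes the monotonicity statement by a one-line induction.

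First I would fix the bookkeeping. At the end of step $\ell$ the algorithm returns the closed-loop control $\overline u^{\ell}$, and the forward trajectories produced in item $\sharp 3$ are the $\rho_\sigma^{\ell}$ solving \eqref{eClosedLindblad} with that control, i.e.\ $\dot\rho_\sigma^{\ell}=\mathcal L_{\overline u^{\ell}}(\rho_\sigma^{\ell})$ on $[0,T_f]$ and $\rho_\sigma^{\ell}(0)=\Pi_{|\epsilon_\sigma\rangle}$. At the following step, item $\sharp 2$ sets the reference $\overline u=\overline u^{\ell}$ and produces, by backward integration from $J_\sigma(T_f)=\Pi_{|\phi_\sigma\rangle}$, the observables $J_\sigma$ solving $\dot J_\sigma=-\mathcal L^*_{\overline u^{\ell}}(J_\sigma)$ on $[0,T_f]$ as in \eqref{eReferenceLindblad}. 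The essential (if elementary) point is that the \emph{same} time-dependent control $\overline u^{\ell}$ drives both $\rho_\sigma^{\ell}$ and this $J_\sigma$ over the \emph{same} interval $[0,T_f]$, because the reference fed to the backward pass of a step is precisely the closed-loop control emitted by the forward pass of the preceding step.

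Next I would differentiate the pairing. Using the two evolution equations together with the defining adjointness $\trace\!\left(\mathcal L^*_{\overline u^{\ell}}(J)\rho\right)=\trace\!\left(J\,\mathcal L_{\overline u^{\ell}}(\rho)\right)$ — the very identity $\trace(-\mathcal L^*_{\overline u}(J_\sigma)\rho_\sigma)+\trace(J_\sigma\mathcal L_{\overline u}(\rho_\sigma))=0$ used above in the computation of $\frac{d\mathcal V}{dt}$, equivalently the duality relation \eqref{eStar} — one obtains $\frac{d}{dt}\trace\!\left(J_\sigma(t)\rho_\sigma^{\ell}(t)\right)=0$ for every $\sigma\in\Lambda$. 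Evaluating this constant at $t=0$ and $t=T_f$ and inserting the boundary data of \eqref{eReferenceLindblad}--\eqref{eClosedLindblad} gives $\trace\!\left(J_\sigma(0)\,\Pi_{|\epsilon_\sigma\rangle}\right)=\trace\!\left(\Pi_{|\phi_\sigma\rangle}\,\rho_\sigma^{\ell}(T_f)\right)$. Summing over $\Lambda$ and subtracting from $\nbar^2$ yields that the value $\mathcal V(0)$ at the start of the new step equals the value $\mathcal V(T_f)$ at the end of step $\ell$ (which is the content of the theorem up to the cosmetic labelling of the steps).

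Finally I would close the induction: within the new step the Lyapunov-based choice of $\widetilde u$ gives $\frac{d\mathcal V}{dt}=-\sum_{k=1}^m g_k F_k^2\le 0$, hence its terminal value $\mathcal V(T_f)$ does not exceed its initial value $\mathcal V(0)$, which by the previous paragraph equals $\mathcal V(T_f)$ of step $\ell$; iterating over steps shows the sequence of terminal Lyapunov values is non-increasing. I do not expect a genuine obstacle: the only points needing care are the sign/time-direction convention in item $\sharp 2$ (minus the adjoint generator, integrated backward), which is exactly what makes the pairing \emph{conserved} rather than merely monotone, and the observation — trivial but load-bearing — that the backward pass of each step is run with the control generated by the forward pass of the previous step, which is what glues consecutive steps into one non-increasing sequence.
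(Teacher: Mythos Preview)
Your proof is correct and follows essentially the same route as the paper: you identify that the forward trajectories $\rho_\sigma^{\ell}$ of step~$\ell$ and the backward observables $J_\sigma$ of step~$\ell+1$ are driven by the \emph{same} control $\overline u^{\ell}$, and then use the Schr\"odinger/Heisenberg adjointness (equivalently~\eqref{eStar}) to conclude that $\trace(J_\sigma(t)\rho_\sigma^{\ell}(t))$ is constant in $t$, whence $\mathcal V(T_f)$ at step~$\ell$ equals $\mathcal V(0)$ at step~$\ell+1$. The paper invokes~\eqref{eStar} in one line where you differentiate explicitly, and leaves the final monotonicity induction implicit where you spell it out, but the argument is the same.
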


\begin{proof}
 In $\sharp 3.$ of step $\ell$ we integrate (forward):
\begin{eqnarray*}
 \frac{d {  \rho}_\sigma(t)}{dt} & = & {\mathcal{L}}_{u(t)} \left({ \rho}_\sigma(t)\right), \;\;\;
   { \rho}_\sigma (0) =  \Pi_{|\epsilon_\sigma\rangle}, \sigma \in \Lambda \\\
  u(t) & = & {\overline u}^\ell(t)
 \end{eqnarray*}
We stress that all the initial condition are the same $\rho_\sigma(0), \sigma \in \Lambda$ for all $\ell =1,2, \ldots$.
Now note that, in $\sharp 2.$ of step $\ell+1$, we integrate backwards
 \begin{eqnarray*}
  \frac { d J_\sigma}{dt} (t) & = & -{\mathcal L}^*_{{\overline u}(t)} \left({J}_\sigma (t)\right), \;\;\;
  J_\sigma(T_f)  =   \Pi_{|\phi_\sigma\rangle} = |\phi_\sigma\rangle \langle \phi_\sigma|, \sigma \in \Lambda\\
    \overline u(t) & = & {\overline u}^\ell(t)
\end{eqnarray*}
Note that, as we integrate backwards  $-{\mathcal L}^*_{{\overline u}(t)}$, that is, the time reversing of the adjoint system, then ${ J}_\sigma(0)$
plays the role of the final observable, whereas ${ J}_\sigma(T_f)$ plays the role of the initial observable.
Then the Schrodinger/Heisenberg duality gives:
\[
\underbrace{\trace({ J}_\sigma(0) \rho_\sigma(0))}_{\mbox{defines $\mathcal V(0)$ for step $\ell+1$}} = \underbrace{\trace(J(T_f) { \rho}_\sigma (T_f))}_{\mbox{defines $\mathcal V(T_f)$ for step $\ell$}}
\]
\end{proof}

\subsection{Choice of the  seed}
\label{sSeed}

We state now some remarks about the choice of the seed ${\overline u}^0(t)$ of the proposed algorithm.
We may choose an integer $M > 0$,   a  period $T > 0$ and  small amplitude $\mathbf{A} > 0$ and
two vectors $\mathbf a$, $\mathbf b$ in $\RR^{m~M}$ with  \textbf{random}, independent, and uniformly distributed
entries in $[-1,1]$:
\begin{eqnarray*}
\mathbf a & = & \{a_{k,\ell}, k=1,\ldots,m, \ell=1, \ldots, M\}\\
\mathbf b & = & \{b_{k,\ell}, k=1,\ldots,m, \ell=1, \ldots, M\}
\end{eqnarray*}
So define the seed $\overline{u}^0$    from a given  reasonable  initial  control   $[0,T_f]\ni t \mapsto u^{init}(t)$ perturbed as follows:
\begin{equation}
  \label{seed}
  \overline{u}^0_k (t) = u^{init}_k (t)+\mathbf{A} \left\{ \sum_{\ell = 1}^{M} \left[  a_{k \ell} \sin(2\ell \pi t / T) + b_{k \ell} \cos(2\ell \pi t / T) \right]\right\}
 \end{equation}
Such  choices are inspired from    a  mathematical  result given in~\cite{PerSilRou19} for purely controllable Hamiltonian dynamics and  requiring that the seed must contain the presence of sufficient harmonics  in order to guarantee the convergence to an exact gate generation. For $M$ big enough, this convergence is ensured with probability one with respect to the random variables $\mathbf a$, $\mathbf b$. This is due to the fact that the proof is based on the main ideas of the Coron's return method~\cite{Cor07}.

\subsection{Optimal control interpretation}
\label{sOptimal}

In this section we shall show that our algorithm converges to a control law that obeys the stationary conditions of first order of an optimal control problem.  We consider the   notations used in~(\ref{eReferenceLindblad},\ref{eClosedLindblad}) for the quantum  states $\Pi_{|\epsilon_\sigma\rangle}$ and  observables
 $\Pi_{|\phi_\sigma\rangle}$, $\sigma \in \Lambda$ defining  a quantum gate.
 We will consider the same set of $\nbar^2$ copies of the system:
\begin{equation}
\label{eq:rhosig}
    \frac{d { \rho}_\sigma(t)}{dt} = {\mathcal L}_u (\rho_\sigma(t)) = {\mathcal L}_0 (\rho_\sigma(t)) + \sum_{k=1}^m u_k [-\iii H_k, \rho_{\sigma}(t)]
,~
\rho_\sigma(0) = \Pi_{|\epsilon_\sigma\rangle}, \sigma \in \Lambda
\end{equation}
 Consider the following  optimal control problem:
 \begin{center}
 \em
     Find $u : [0, T_f]\rightarrow \RR^m$ in order to minimise:
$
 \nbar^2 - \sum_{\sigma \in \Lambda} \trace ( \Pi_{|\phi_\sigma\rangle} \rho_\sigma(T_f) )
$
subject to
$
 \frac{d{\rho}_\sigma(t)}{dt} = {\mathcal L}_u (\rho_\sigma(t)),
~
\rho_\sigma(0) = \Pi_{|\epsilon_\sigma\rangle}, \sigma \in \Lambda
.
$
 \end{center}
 Consider the Lagrangian
\[
   \nbar^2 - \sum_{\sigma \in \Lambda} \trace ( \Pi_{|\phi_\sigma\rangle} \rho_\sigma(T_f)) + \sum_{\sigma \in \Lambda} \int_0^{T_f} \trace\left(J_\sigma(t)\big({\mathcal L}_u (t) - \frac{d{\rho}_\sigma}{dt}(t)\big)\right) dt
\]
with the adjoint operators $J_\sigma(t)$.
The stationary conditions of this Lagrangian versus any variation  $\delta \rho_\sigma(t)$ such that $\delta \rho_\sigma(0)=0$ yield to the adjoint
system for $J_\sigma$ with its  final conditions:
\begin{equation}
\label{eq:Jsig}
  \forall \sigma \in \Lambda,~\forall t \in [0, T_f],~ \frac { d J_\sigma}{dt} (t) =-{\mathcal L}^*_{{ u}(t)} \left({J}_\sigma (t)\right),~
  J_\sigma(T_f) = \Pi_{|\phi_\sigma\rangle} = |\phi_\sigma\rangle \langle \phi_\sigma|
\end{equation}
The stationary conditions of this Lagrangian versus any variation  $\delta u(t)$ yield to
\begin{equation}
\label{eq:uvar}
\forall t \in [0, T_f],~\forall k \in \{1, \ldots, m\},~F_k(t) = \sum_{\sigma \in \Lambda} \trace (J_\sigma(t) [-\iii H_k, \rho_\sigma(t)]) =0,
.
\end{equation}
Equations~\eqref{eq:rhosig}, \eqref{eq:Jsig} and~\eqref{eq:uvar} are thus the first order stationary conditions of the above optimal control problem.

Returning  to our algorithm, recall that:
\begin{itemize}
\item The sequence ${\mathcal V}_\ell = \mathcal V(T_f)$ is nonnegative and nondecreasing along the steps $\ell = 1, 2, \ldots$ of the algorithm.
Hence, this sequence must converge to some $\mathcal V^* \geq 0$.

\item Recall that $\frac{d{\mathcal V}}{dt} = - \sum_{k=1}^{m} g_k F_k^2(t) $ with $\widetilde{u}_k(t)= g_k F_k(t)$.

\item Then $\underbrace{{\mathcal V}_{\ell-1}}_{\mathcal V(0)} - \underbrace{{\mathcal V}_\ell}_{\mathcal V(T_f)} = \sum_{k=1}^{m} \int_{0}^{T_f} g_k  F_k^2(t) dt \rightarrow 0$ in the compact interval $[0, T_f]$.

\item It is easy to show that $\frac{d{F_k}}{dt}$ is bounded (because $\frac {d\rho_\sigma}{dt}$ and $\frac {d {J}_\sigma}{dt}$ are bounded)  and so the $F_k, F_k^2$ are bounded and Lipchitz continuous.

\item In particular, $F_k(t) \rightarrow 0$ in the sup norm for all $k =1, \ldots, m$ when the iteration step $\ell$ tends to infinity. This results from the monotonicity of the algorithm.

\end{itemize}
Thus the monotonic algorithm converges to some $u$ satisfying the   first order stationary conditions \eqref{eq:rhosig}, \eqref{eq:Jsig} and~\eqref{eq:uvar}, which is equivalent to say that the feedback $\widetilde u$  converges to zero as the iteration number $\ell$ goes to the infinity.

\section{Monotonic algorithm with   gate-time optimization}
\label{sClock}

The idea of ``controlling the clock'' appears for instance in \cite{FLMOR97}
in the context of ``orbital flatness'' (see the references therein for a control historical perspective).
 By controlling the clock we mean that we can introduce a virtual time $\tau$ such that
$\frac{dt}{d\tau} = (1 + v_0(\tau))$ where $v_0$ is the control of the clock.
Basically the virtual time $\tau$ can run  faster ($v_0 < 0$) or
 slower ($v_0 > 0$)
 than the real time $t$. This procedure ensures the existence of a new (virtual) control $v_0$ for the system.
 By controlling the clock one also changes the final time $T_f$ that is associated to the control problem.
 Denote:
\[
\begin{array}{rcl}
 \mathcal{L}_{0} (\rho) &  =  & -{\iii} \left[H_0, \rho \right] +
  \sum_{q=1}^{p} L_q \rho L_q^\dag - \frac{1}{2} \left\{ L^\dag_q L_q, \rho \right\}
\end{array}
\]
\[
\begin{array}{rcl}
 {\mathcal{L}}_{k} (\rho) &  =  & -{\iii} \left[H_k, \rho \right], k=1, \ldots, m
\end{array}
\]
Then the Lindblad master equation reads:\\
\[
 \frac{d \rho}{dt} = {\mathcal{L}}_{0} (\rho) + \sum_{k=1}^{m} u_k(t)  {\mathcal{L}}_{k} (\rho)
\]
The  virtual time $\tau$ produced  by a clock-control $v_0(\tau)$  via
\[
\frac{dt}{d\tau} = (1 + v_0(\tau)) = \alpha(\tau).
\]
cannot  reverse the time direction. Thus  we include the restriction $| v_0(\tau)| < 1$.
Note that the real time $t(\tau)$ is given by
$t(\overline\tau) = \int_0^{\overline \tau} \alpha(\tau) d\tau$.
Then, defining $v_k(\tau)  = (1 + v_0(\tau))u_k[t(\tau)]$ for $k\in\{1,\ldots,m\}$, one obtains:
\begin{eqnarray*}
\frac{{ d\rho}}{d\tau} & = & \frac{{ d \rho}}{dt} \frac{dt}{d\tau}  = (1 + v_0(\tau))({\mathcal{L}}_{0} (\rho) + \sum_{k=1}^{m} u_k {\mathcal{L}}_{k} (\rho)) \\
& = & \left( {\mathcal{L}}_{0} (\rho) +  \sum_{k=0}^{m} { v}_k(\tau) {\mathcal{L}}_{k} (\rho) \right)
\end{eqnarray*}
with $m+1$ virtual scalar control inputs
\[
v(\tau) = (v_0(\tau), v_1(\tau), \ldots, v_m(\tau))\in \RR^{m+1}.
\]
Denote:
\[
\begin{array}{rcl}
 {\mathcal L}^*_{0} (\rho) &  =  & {\iii} \left[H_0, \rho \right] +
  \sum_{q=1}^{p} L_q^\dag \rho L_q - \frac{1}{2} \left\{ L^\dag_q L_q, \rho \right\}
\end{array}
\]
And define the adjoint superoperator
\[
\begin{array}{rcl}
 {\mathcal L}^*_{k} (\rho) &  =  & {\iii} \left[H_k, \rho \right], k=1, \ldots, m
\end{array}
\]
 Then, given the system with clock control:
\[
 \frac{d \rho}{d\tau} = {\mathcal{L}}_{0} (\rho) + \sum_{k=0}^{m} v_k(\tau)  {\mathcal{L}}_{k} (\rho)
\]
Its corresponding adjoint system is given by:
\[
 \frac{d J}{d \tau} = {\mathcal L}^*_{0} (J) + \sum_{k=0}^{m} v_k(\tau)  {\mathcal L}^*_{k} (J)
\]
Then we may state the algorithm equipped with clock control\\[0.3cm]
\textbf{BEGIN ALGORITHM} \\ \
[$\sharp 0.$] Choose $T_{f}^{(0)}$ and the \textbf{seed input} $\overline u^0: [0, T_{f}^{(0)}] \rightarrow \RR^{m}$.  \\
\begin{itemize}
 \item[\  ]\textbf{BEGIN STEP $\ell$}.
    \begin{itemize}

      \item[$\sharp 1.$]  Let $T_f = T_{f}^{(\ell-1)}$. Define ${\overline v} : [0, T_f] \rightarrow \RR^{m+1}$, with ${\overline v}_0(\tau) = 0$ and ${\overline v}_k(\tau) = {\overline u}_k^{\ell-1}(\tau),  \forall \tau \in [0, T_f]$, $k=1, \ldots, m$.

     \item[$\sharp 2.$] Set ${\overline v}(\tau) = {\overline v}^{\ell-1}(\tau)$.
     Integrate (backwards) the  $\nbar^2$ copies of the adjoint system in $[0, T_f]$.
     Obtain the trajectories $J_\sigma(\tau)$,  for $\tau \in [0, T_f]$ with final condition $J_\sigma(T_f) = \Pi_{|\phi_\sigma\rangle}$, $\sigma \in \Lambda$.

   \item [$\sharp 3.$]   Integrate  the  $\nbar^2$ copies of the  closed loop system in $[0, T_f]$.\\
     Obtain the trajectories $\rho_\sigma(\tau)$,  for $\tau \in [0, T_f]$ with initial condition $\rho_\sigma(0) = \Pi_{|\epsilon_\sigma\rangle}$.
    Set ${\overline v}_k^{\ell}(\tau) =   {\overline v}_k(\tau) + g_k\sum_{\sigma \in \Lambda} \trace (J_\sigma(\tau) \mathcal{L}_k (\rho_\sigma(\tau)))$, $k=0, 1, \ldots, m$ and gains $g_k >0$.

   \item [$\sharp 4.$]     Compute the new final time $T_{f}^{(\ell)} = \int_{0}^{T_f} (1 + {\overline v}^\ell_0(\tau)) d\tau$.\\
   Compute ${\overline u}_k^\ell(t ({ \tau})) = \frac{ {\overline v}^\ell_k( \tau)}{ 1 + {\overline v}^\ell_0(\tau)}$, for $\tau \in [0, T_f]$, k=1, \ldots , m \\
   where $t ({ \tau})=\int_{0}^{ \tau} (1 + v_0(\tau')) d\tau'$.

    \item [$\sharp 5.$] If the final fidelity  is acceptable,
                 then terminate. Otherwise, execute step $\ell+1$.

    \end{itemize}
 \item[\   ] \textbf{END STEP $\ell$}.
 \end{itemize}
 \textbf{END ALGORITHM}.

The corresponding optimal control problem is then
 \begin{center}
 \em
     Find $T_f >0$ and $u : [0, T_f]\rightarrow \RR^m$ in order to minimise:
$
 \nbar^2 - \sum_{\sigma \in \Lambda} \trace ( \Pi_{|\phi_\sigma\rangle} \rho_\sigma(T_f) )
$
subject to
$
 \frac{d{\rho}_\sigma(t)}{dt} = {\mathcal L}_u (\rho_\sigma(t)),
~
\rho_\sigma(0) = \Pi_{|\epsilon_\sigma\rangle}, \sigma \in \Lambda
.
$
 \end{center}
 The first order stationary conditions~\eqref{eq:rhosig}, \eqref{eq:Jsig} and~\eqref{eq:uvar} have to be completed by the  following  condition relative to the variation of  $T_f$:
\begin{equation}
\label{eq:Tfvar}
     \sum_{\sigma \in \Lambda} \trace\big( \Pi_{|\phi_\sigma\rangle} \mathcal{L}_u(\rho_\sigma(T_f)) \big)=0
     .
\end{equation}
It is then clear that the above monotonic algorithm including clock-control always converges to some  $u$ and $T_f>0$ satisfying these  first order stationary conditions.

\section{Numerical simulations for confined cat-qubit gates}
\label{sExample}

{ The next two gate generations  are  taken from \cite{Maz14,GauSarMir22} combining  dissipative dynamics towards the code space   with  adiabatic Hamiltonian dynamics  for  cat-qubit gates.

\subsection{ $Z$-gate}}
Define the dissipation super operator:
\[
\mathcal D [L] \rho = L \rho L^\dag - \frac{1}{2} \{L^\dag L, \rho\}.
\]
Following~\cite[equation (5)]{Maz14} consider the following Lindblad  equation
\[
\frac{d \rho(t)}{dt} =  \kappa_2 \mathcal D [a^2 - \alpha^2] \rho + \kappa_1 \mathcal D [a] \rho -\iii u [ H_c , \rho]  \]
where  $a$ is the annihilation operator of an harmonic quantum oscillator with infinite dimensional Hilbert space with Fock Hilbert basis $\Big(| n\rangle\Big)_{n\geq 0}$: $a | n\rangle = \sqrt{n} | n-1\rangle$;
$\alpha$ is  any real number ($\alpha^2\equiv \alpha^2 I$ with $I$ identity operator), and the
 control Hamiltonian associated to the scalar control $u$  is given by $H_c= (a + a^\dag)$.

We have chosen  $\kappa_2=1$ and  $\kappa_1=1/100$.
We have taken $\alpha =2$ and we have truncated the Hilbert basis  up to $n^{\max}=20$ Fock states (including $|0\rangle$).
The definition of  coherent state  $|\alpha\rangle$ reads:
\[
|+\alpha \rangle  = \exp(\frac{-|\alpha|^2}{2}) \sum_{n=0}^\infty \frac{\alpha^n}{\sqrt{n!}} \; \; | n\rangle
\]
We define respectively the even and odd parity cats of norm one:  \\
\[
|C^{\pm}_{\alpha}\rangle \propto {|+\alpha \rangle  \pm |-\alpha \rangle }
\]
From a unitary evolution point of view, our quantum gate may be defined as the following steering problem:
steer $|C^{+}_{\alpha}\rangle$ to $|C^{-}_{\alpha}\rangle$, and
steer  $|C^{-}_{\alpha}\rangle$ to $|C^{+}_{\alpha}\rangle$.\footnote{This unitary operation is equivalent to the Z-gate considering $|0 \rangle \approx | \alpha \rangle$ and $|1 \rangle \approx |-\alpha \rangle$. In other words, in terms of the unitary operation we want to
steer $|e_1\rangle=|0 \rangle$ to $|f_1\rangle=|0 \rangle $  and  to
steer $|e_2\rangle=|1 \rangle$ to $|f_2\rangle=-|1 \rangle $.}

Define:
$|e_{12R} \rangle = \frac{|e_1 \rangle + |e_2 \rangle}{\sqrt 2}$,
$|e_{12I} \rangle = \frac{|e_1 \rangle + \iii |e_2 \rangle}{\sqrt 2}$,
$|f_{12R} \rangle = \frac{|f_1 \rangle + |f_2 \rangle}{\sqrt 2}$,
$|f_{12I} \rangle = \frac{|f_1 \rangle + \iii |f_2 \rangle}{\sqrt 2}$.
{As in Definition \ref{dD1}, the operations defining the Z-gate in the context of density matrices are:}
steer $|e_1 \rangle \langle e_1|$ to $|f_1 \rangle \langle f_1|$ at $t=T_f$,
steer $|e_2 \rangle \langle e_2|$ to $|f_2 \rangle \langle f_2|$ at $t=T_f$,
steer $|e_{12R} \rangle \langle e_{12R}|$ to $|f_{12R} \rangle \langle f_{12R}|$ at $t=T_f$ and steer $|e_{12I} \rangle \langle e_{12I}|$ to $|f_{12I} \rangle \langle f_{12I}|$ at $t=T_f$.

 For realising this gate at $t=T_f$, we have a well known nice constant (turning) adiabatic control:
\begin{equation}\label{eq:adiabcontrol}
  [0,T_f] \ni t \mapsto u(t)=u_{ad} \equiv \frac{\pi}{4 T_f \alpha}
\end{equation}
ensuring, when $\kappa_1=0$, an almost perfect gate for $T_f$ large enough. When $\kappa_1>0$, $T_f$ cannot be chosen too large (typically $T_f \kappa_1\ll 1$) in order to avoid the decoherence due to photon losses at rate $\kappa_1$.
In all simulations, the seed input ${\overline u}^0(t)$ is considered to be the adiabatic control slightly perturbed
by the sum of harmonics
\begin{equation}
\label{eSeed}
{\overline u}^0(t) = u_{ad}(t) + \mathbf{A} \sum_{\ell = 1}^{M} \left[ a_{k \ell} \sin(2\ell \pi t / T) + b_{k \ell} \cos(2\ell \pi t / T)\right],
\end{equation}
with $T=T_{f}^{(0)}$, and $ \mathbf{A} = \frac{|u_{ad}|}{100}$, $M=3$.
The control gain is $g_1=1$, $k=1, \ldots, m$. The gain of the clock-control is $g_0=0.1$.

Given a pure state $\Pi_{|\xi \rangle} = |\xi \rangle \langle \xi |$ with $|\xi \rangle \in \CC^n$ and a density matrix  $\rho$,  the  fidelity function is:
\[
\mbox{Fidelity}(\rho, \Pi_{|\xi \rangle}  ) = \langle \xi| \; \rho \; |\xi  \rangle
\]
Recall that the final conditions defining the quantum gate are given by pure states:
 \[
 J_\sigma(T_f) = \Pi_{ | \phi_\sigma \rangle} = | \phi_\sigma \rangle \langle \phi_\sigma |, \sigma \in \Lambda.
 \]
 The  gate infidelity $\mathcal I$ in the end of each step of the algorithm will be defined by
\begin{equation}\label{eq:infidelity}
\mathcal I = \max_{\sigma \in \Lambda} \left\{1 -  \mbox {Fidelity} \left( \rho_\sigma(T_f),  | \phi_\sigma \rangle \langle \phi_\sigma | \right) \right\}
=  \max_{\sigma \in \Lambda} \left\{1 - \langle \phi_\sigma| \; \rho_\sigma(T_f) \; |\phi_\sigma  \rangle \right\}
\end{equation}
which is the worst case infidelity for each  trajectory $\rho_\sigma(t), \sigma \in \Lambda$.

The simulation results with the clock-control monotonic algorithm  are summarised in figures \ref{fA} and \ref{fB}. Figure  \ref{fA} (left side) shows that, for $T_{f}^{(0)}= 5$ the final time converges to
some value that is close to $0.85$. For $T_{f}^{(0)}= 0.5$, the final time seems to converge to the same value, close to $0.85$. For $T_{f}^{(0)}=0.85$, the
values of $T_f$ along the steps of the algorithm remains always close to the initial value. Figure \ref{fA} (right side) shows that
the final infidelity is improved a lot by the algorithm for $T_{f}^{(0)}=5$, a little bit for $T_{f}^{(0)}=0.5$ and almost nothing for $T_{f}^{(0)}=0.85$ which
seems to be very close to the ``optimal'' final time. It must be stressed that the infidelity for the (unperturbed) adiabatic control for $T_f=0.85$ (without running the algorithm) is  $0.0696$. After running 80 iterations of the algorithm with $T_{f}^{(0)}=0.85$, the gate infidelity is $0.0669$, a really
small improvement with respect to the (unperturbed) adiabatic control conceived with the optimal final time. Thus the main interest in this example is to find the best value of $T_f$.

Figure \ref{fB} shows that, in all cases, the control pulses generated by the algorithm seems to converge to the same control pulses (the difference is almost indistinguishable in the (top) figure \ref{fB}. Note that we have introduced a saturation of $u$ between $-0.8$ and $+0.8$, which is compatible with the algorithm. The algorithm produces a small improvement of the adiabatic control conceived with the optimal final time by augmenting the control effort close to the beginning and to the end of the interval $[0, T_f]$. From the findings in last figure, one may say that the more important here is to optimise the time of the gate. In this case the (unperturbed) adiabatic control is almost so efficient than the ``optimal' control that is generated by our algorithm. The bottom of Figure \ref{fB} depicts the seeds in the three cases $T_{f}^{(0)} = 5$, $T_{f}^{(0)} = 0.5$ and $T_{f}^{(0)} = 0.85$. They are in fact the slightly perturbed adiabatic controls for each correspondent case.

One can run this algorithm for more complete models that includes for instance the ``buffer''  cavity, and then
one may tune the final time of adiabatic control. This could be done even in the case that the quantum gate is generated by a
series of different adiabatic control pulses

\begin{figure}
    \centering{\includegraphics[scale=0.7]{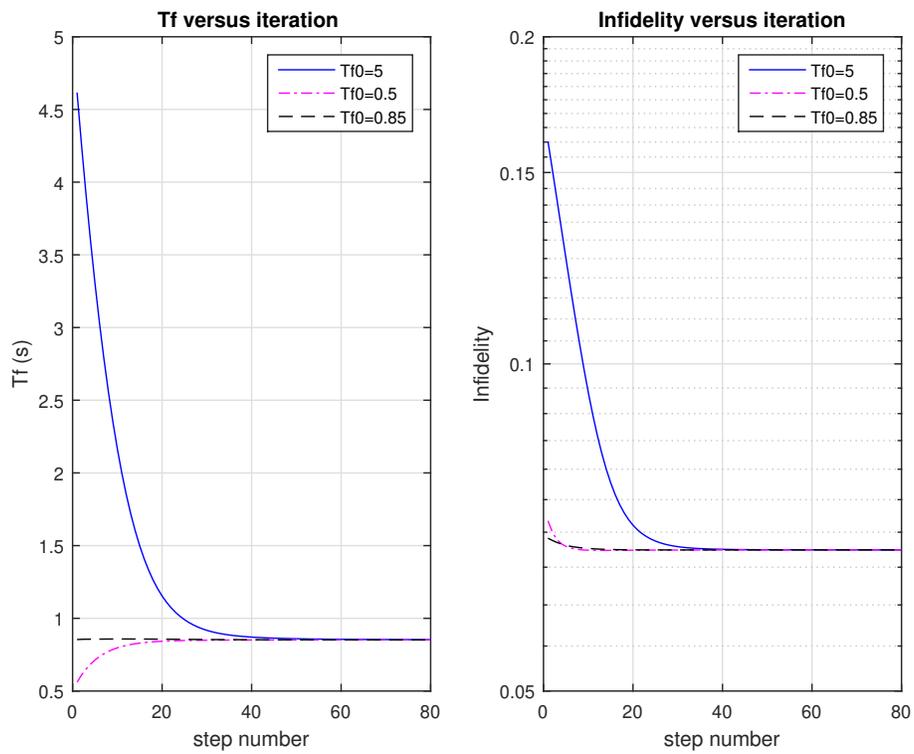}}
    \caption{Z-gate generation. Left side: evolution of $T_f$  and of infidelity $\mathcal I$ (see~\eqref{eq:infidelity}). In these simulations $T_{f}^{(0)} = 5$,
    $T_{f}^{(0)} = 0.85$, and $T_{f}^{(0)}=0.5$ ($\kappa_1=1/100$). Right side: infidelity that is obtained for each iteration of the algorithm for
    $T_{f}^{(0)} = 0.85$, and $T_{f}^{(0)}=0.5$.
    }
    \label{fA}
\end{figure}

\begin{figure}
    \centering{\includegraphics[scale=0.7]{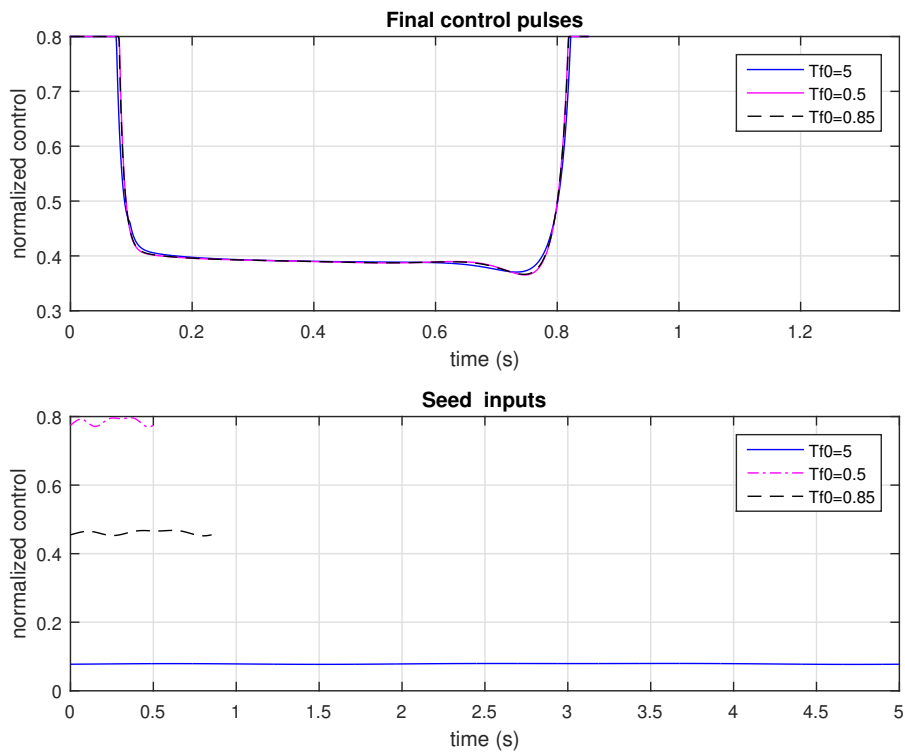}}
    \caption{Z-gate generation. Top: Final control pulses  for $T_{f}^{(0)} = 5$, for  $T_{f}^{(0)} = 0.85$, and for $T_{f}^{(0)} =0.5$ ($\kappa_1=1/100$). Botton: The seed inputs for each case, $T_{f}^{(0)} = 5$, for  $T_{f}^{(0)} = 0.85$, and for $T_{f}^{(0)} =0.5$ (seed inputs are slightly perturbed  constant adiabatic control~\eqref{eq:adiabcontrol}).}
    \label{fB}
\end{figure}

{

\subsection{ CNOT-gate}

The    notations similar to those of  previous  sub-section  are   used  here. Following~\cite[equation (16)]{GauSarMir22} where we have added single photon losses of rate $\kappa_1$, the master equation describing the evolution  with scalar control input $u(t)$  is given by
\begin{eqnarray*}
\frac{d \rho}{dt} & = & -\iii u \big[ (a_{co}+a^{\dag}_{co}-2 \alpha I_{co}) \otimes (a^\dag_{ta}a_{ta} - \alpha^2I_{ta})\otimes  I_{qu}~, ~ \rho\big]
\\& & - \iii g_2 \big[ (a^2_{co}  - \alpha^2)\otimes I_{ta}\otimes \ket{e}\!\bra{g} + ((a^\dag)^2_{co}  - \alpha^2)\otimes I_{ta}\otimes  \ket{g}\!\bra{e}~,~ \rho\big] \\
  & & +  k_2 {\mathcal D}[(a^2_{co} - \alpha^2 I_{co})\otimes I_{ta}\otimes I_{qu} ] \rho
  \\
  && + k_1 {\mathcal D}[a_{co}\otimes I_{ta}\otimes I_{qu}] \rho+  k_1 {\mathcal D}[I_{co}\otimes a_{ta}\otimes I_{qu}] \rho
\end{eqnarray*}
where $\alpha^2=4$, $k_2=1$, $k_1=\frac{1}{1000}$, $g_2=10$ and $k_2 T_f \approx 1$.  The underlying Hilbert-space is the  tensor of three Hilbert-space $\mathcal{H}_{co}\otimes \mathcal{H}_{ta}\otimes \mathcal{H}_{qu}$: Hilbert-space of the control cat-qubit $\mathcal{H}_{co}$, Hilbert-space of the target cat-qubit $\mathcal{H}_{ta}$, Hilbert-space of an ancillary  qubit $\mathcal{H}_{qu}\equiv \mathbb{C}^2$. The operators $a_{co}$ and $a_{ca}$ are the annihilation operators of the control
and target cat-qubits. Denote by $|\alpha/-\alpha\rangle_{co/ta} \approx |0_L/1_L\rangle_{co/ta}$ the coherent states for the control and target cat-qubits, and by $\ket{g}$ and $\ket{e}$  the ground and excited states of the  qubit. In the context of unitary transformations, the CNOT-gate
is  the operator that maps $e_1= |0_L\rangle_{co} \otimes |0_L\rangle_{ta}  \otimes \ket{g}$ to
$f_1 = |0_L\rangle_{co} \otimes |0_L \rangle_{ta}  \otimes\ket{g}$,  $e_2= |0_L\rangle_{co} \otimes |1_L\rangle_{ta}  \otimes\ket{g}$ to
$f_2 = |0_L\rangle_{co} \otimes |1_L \rangle_{ta}  \otimes\ket{g}$, $e_3= |1_L\rangle_{co} \otimes |0_L\rangle_{ta}  \otimes\ket{g}$ to
$f_3 = |1_L\rangle_{co} \otimes |1_L \rangle_{ta}  \otimes\ket{g}$ and $e_4= |1_L\rangle_{co} \otimes |1_L\rangle_{ta}  \otimes\ket{g}$ to
$f_4 = |1_L\rangle_{co} \otimes |0_L \rangle_{ta}  \otimes\ket{g}$.
As in Definition \ref{dD1}, the operations that define  the CNOT-gate in the context of density matrices are:\\
(i) steer $|e_i \rangle \langle e_i|$ to $|f_i \rangle \langle f_i|$ at $t=T_f$, for $i=1, \ldots, 4$;\\
(ii)  steer $|e_{ijR} \rangle \langle e_{ijR}|$ to $|f_{ijR} \rangle \langle f_{ijR}|$ at $t=T_f$ and\\
(iii) steer $|e_{ijI} \rangle \langle e_{ijI}|$ to $|f_{ijI} \rangle \langle f_{ijI}|$ at $t=T_f$
for all $i, j \in \{1, \ldots, 4\}$ with $i > j$.

Simulations shows that\footnote{This property that ensures that conditions (ii) and (iii) are not important is due to particular symmetries if the system. See also~\cite{Goerz2014} for interesting and  connected results.} only the conditions (i) are sufficient to generate the gate up to a very good precision. The improvement of the infidelity of considering both conditions (i) and (ii) is less that $1\%$ but the computation effort  is four times greater. All the presented simulations considers only the conditions (i) for the construction of the Lyapunov function. The (in)fidelity that is presented  considers only the set of conditions (i), but the complete set of conditions (i), (ii) and (iii) are considered for computing the final (in)fidelity, which we call by\footnote{The infidelity correction when one computes the worst case of all the conditions (i), (ii), (ii) with respect to the infidelity computed only with the set of conditions (i) is  approximately given by $0,8 \%$ in all cases.} ``corrected-infidelity''.
It well  known that a nice (constant)  adiabatic control for generating a CNOT gate at $t=T_f$ is given by $u=u_{ad}=\frac{\pi}{4 \alpha T_f}$.

The proof of Theorem \ref{t1} shows that the Lyapunov function of the end of Step $\ell$ is equal to the Lyapunov function of the begining of step $\ell +1$. However the numerical integration  induces and error that is reflected by a difference of these values. This numerical difference can be used to estimate the numerical error of the Runge-Kutta integration. This can be used to find a convenient time-step of the integration. Also, the number of levels of both resonators for the truncated models were estimated to be at least $n=17$ for $\alpha =2$. A greater value of $\alpha$ will certainly need a greater $n$ to in order to ensure the same precision. This means that the underlying Hilbert-space has  dimension $n \times n \times 2 = 578$. So the $578\times 578$-density matrix of the system represents a state of dimension $334 \, 084$ for the Runge-Kutta integration of the ODE system
 For the next results we have considered a a slightly perturbed adiabatic control defined for $T_f^{(0)}=1.5$ as the seed of the algorithm with the same form \eqref{eSeed}  with $T=T_{f}^{(0)}$,  $\mathbf{A} = \frac{2|u_{ad}|}{1000}$, $M=3$.

At the top of Figure~\ref{FC}
the evolution of the infidelity $\mathcal I = 1 - \mathcal F$ s presented along the steps of the algorithm.
The final (uncorrected) infidelity  $0.0009784$ is attained in step 1249 of the algorithm.
The final corrected\footnote{Corrected in the sense that it considers the worst case infidelity of all the 16 conditions (i) and (ii) of definition \ref{dD1}.}  infidelity at the step 1249 is $0.0009865$. The variation between the corrected and the uncorrected value is $ \approx 0.84\%$.
At the bottom of same figure, the evolution of the gate time $T_f$ is presented. A final $T_f = 1.259 s$ is attained in step 1259 of the algorithm. At the top of Figure \ref{FC} it is presented the final control pulse that is constructed in step 1259. It can be compared with the constant adiabatic control for the same gate time $T_f=1.259 s$ and the seed input of the algorithm. It is interesting to say that the definite integral of the final control pulse and the constant adiabatic control on $[0, T_f]$ is respectively given and $3.92721$ and $3.92699$, corresponding to a variation of only $0.6\%$. However, the infidelity of the constant adiabatic pulse defined with the gate time  $T_f=1.259 s$ is $0.00143$, which is more that $40\%$ bigger that the one of our final control pulse that is produced by our algorithm. In some sense, our final control pulse is an optimised adiabatic control, since the integral is almost conserved. The bottom of figure \ref{FD}
shows the final fidelity of the constant adiabatic control as a function of the gate time $T_f$.  The optimal value of the infidelity of the constant adiabatic control shown in Figure \ref{FD} is close to $0.00128$ (corresponding to $T_f \approx 1.8 s$), which is  $29.2\%$ greater than (corrected) infidelity of the final ``optimal'' control pulse. Anyway, it is clear that the pulse shape is rather important in this second example. For the second example, the problem of optimising the fidelity with respect to the gate-time  of a constant adiabatic control (which concerns to plot of the botton of Figure \ref{FD}) gives a rather different final infidelity when one optimises both the gate-time and the shape of the control input, which concerns to the result of our algorithm depicted in Figure \ref{FC}.

\begin{figure}
    \centering{\includegraphics[scale=0.7]{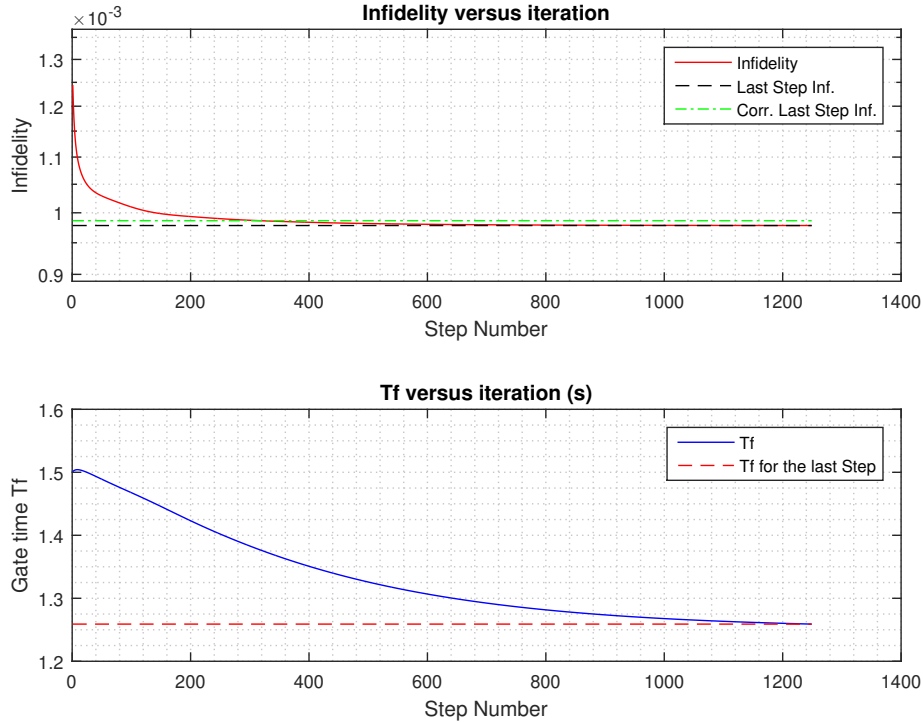}}
    \caption{CNOT-gate. Top: Evolution of gate-infidelity along the steps of the algorithm. The (uncorrected) infidelity (see~\eqref{eq:infidelity}) of the last step is given by $0.9784\times 10^{-3}$. The corrected infidelity of the last step is $0.98655\times 10^{-3}$ (the variation between the corrected and the uncorrected value is $ \approx 0.84\%$). Botton: The evolution of the gate-time $T_f$ along the steps of the algorithm. The gate-time for the last step is $T_f = 1.259 $.}
    \label{FC}
\end{figure}

\begin{figure}
    \centering{\includegraphics[scale=0.7]{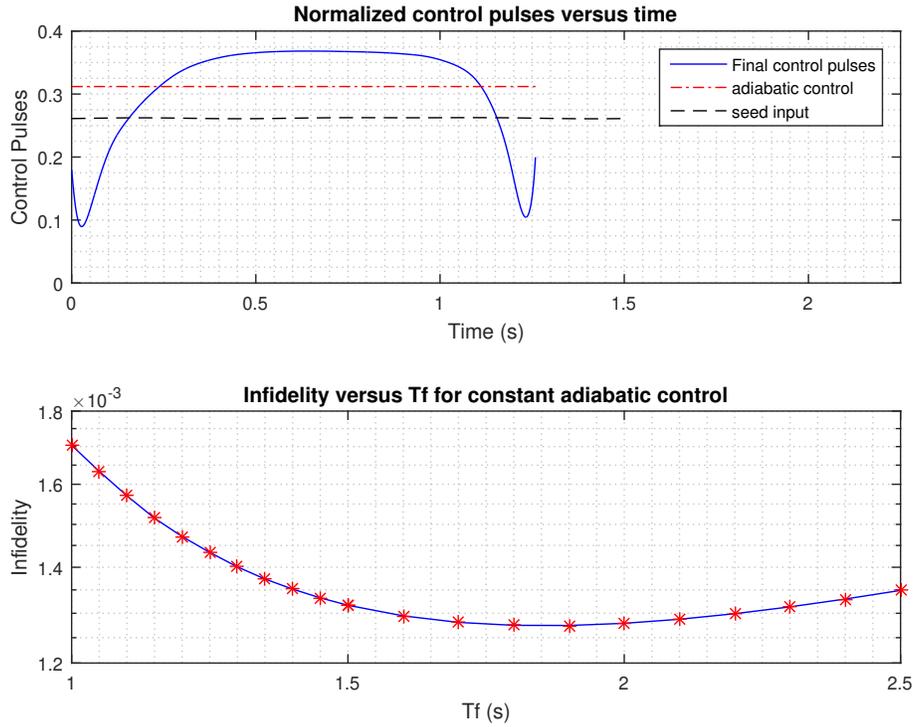}}
    \caption{CNOT-gate. Top: Final control pulses generated by the algorithm of section~ref{sClock}. The constant adiabatic control for the same $T_f$ of the final control pulses is also presented along with the seed of the algorithm, which is a slightly  perturbed adiabatic control defined  for a initial gate-time $T_{f}^{(0)} = 1.5 s$. Botton: Infidelity that was obtained after simulating a constant adiabatic control presented as a function of gate-time $T_f$. The minimum infidelity is close to $0.00128$ for $T_f =1.8 $, whereas the infidelity obtained with algorithm is $0.00098655$ with a shorteroptimized time $T_f = 1.259 $ .}
    \label{FD}
\end{figure}

}

{
\section{Conclusions}
A monotonic numerical method for generating quantum gates for open systems is described in this work. This method is strongly related to a \emph{Smooth-Case} version of the Krotov method that considers \emph{sequential} input update, that is, the input is updated ``on the fly''  of the simulation of the system dynamics  \cite{Koch2016,Goerz2014}. Our algorithm is generalised in this work by including the so-called \emph{clock-control}, which was shown to be equivalent to an algorithm that converges to the stationary conditions of an optimal control problem that regards not only the shape of the control pulses, but also seeks an optimal gate-time $T_f$. The effectiveness  of this generalised algorithm was tested in two case-studies of physical interest (confined cat-qubit gates), showing promising  results and the ability of obtaining the optimal gate-time  and optimal control pulses for these two examples. The standard form of such algorithm called RIGA in~\cite{PerSilRou19} is different from the algorithm that is presented in this work. In   \cite{CBA2024}, the authors shows how one obtains, for Hamiltonian dynamics  the presented algorithm from RIGA  . It is also shown in \cite{CBA2024} that GRAPE is a kind
of discrete version of RIGA when the objective function is the Lyapunov function, and when there is no sequential update, that is, the  input is updated only in the end of each step of the algorithm.}


\begin{thebibliography}{10}

\bibitem{BEAPERROU12}
K.~Beauchard, P.~S. {Pereira da Silva}, and Pierre Rouchon.
\newblock Stabilization for an ensemble of half-spin systems.
\newblock {\em Automatica}, 48(1):68--76, 2012.

\bibitem{Bre02}
H.P. Breuer and F.~Petruccione.
\newblock {\em The Theory of Open Quantum Systems}.
\newblock Oxford University Press, 2002.

\bibitem{Cor07}
J.-M. Coron.
\newblock {\em Control and Nonlinearity}.
\newblock American Mathematical Society, 2007.

\bibitem{CBA2024}
P.~S.~Pereira da~Silva and P.~Rouchon.
\newblock On numerical methods for generating quantum gates.
\newblock In {\em in preparation}, 2024.

\bibitem{SecondGRAPE}
P.~de~Fouquieres, S.G. Schirmer, S.J. Glaser, and Ilya Kuprov.
\newblock Second order gradient ascent pulse engineering.
\newblock {\em Journal of Magnetic Resonance}, 212(2):412 -- 417, 2011.

\bibitem{LS1}
D.~Dong and I.~R. Petersen.
\newblock Quantum control theory and applications: a survey.
\newblock {\em IET Control Theory Applications}, 4(12):2651--2671, December
  2010.

\bibitem{FLMOR97}
M.~Fliess, J.~Levine, P.~Martin, F.~Ollivier, and P.~Rouchon.
\newblock Controlling nonlinear systems by flatness.
\newblock In Christopher~I. Byrnes, Biswa~N. Datta, Clyde~F. Martin, and
  David~S. Gilliam, editors, {\em Systems and Control in the Twenty-First
  Century}, pages 137--154, Boston, MA, 1997. Birkh\"{a}user Boston.

\bibitem{GauSarMir22}
R.~Gautier, A.~Sarlette, and M.~Mirrahimi.
\newblock Combined dissipative and hamiltonian confinement of cat qubits.
\newblock {\em PRX Quantum}, 3:020339, May 2022.

\bibitem{GlaserBoscainCalarcoEtAl2015}
S.J. Glaser, U.~Boscain, T.~Calarco, C.P. Koch, W.~K\"ockenberger, R.~Kosloff,
  I.~Kuprov, B.~Luy, S.~Schirmer, T.~Schulte-Herbr\"uggen, D.~Sugny, and F.K.
  Wilhelm.
\newblock Training {Schr\"odinger's} cat: quantum optimal control.
\newblock {\em Eur. Phys. J. D}, 69(12), December 2015.

\bibitem{Goerz2014}
Michael~H Goerz, Daniel~M Reich, and Christiane~P Koch.
\newblock Optimal control theory for a unitary operation under dissipative
  evolution.
\newblock {\em New Journal of Physics}, 16(5):055012, may 2014.

\bibitem{L4}
Symeon Grivopoulos and Bassam Bamieh.
\newblock Lyapunov-based control of quantum systems.
\newblock In {\em Decision and Control, 2003. Proceedings. 42nd IEEE Conference
  on}, volume~1, pages 434--438. IEEE, 2003.

\bibitem{KHANEJA2005}
N.~Khaneja, T.~Reiss, C.~Kehlet, T.~Schulte-Herbr\"{u}ggen, and S.~J. Glaser.
\newblock Optimal control of coupled spin dynamics: design of nmr pulse
  sequences by gradient ascent algorithms.
\newblock {\em Journal of Magnetic Resonance}, 172(2):296 -- 305, 2005.
\newblock DOI: 10.1016/j.jmr.2004.11.004.

\bibitem{Koch2016}
Christiane~P Koch.
\newblock Controlling open quantum systems: tools, achievements, and
  limitations.
\newblock {\em Journal of Physics: Condensed Matter}, 28(21):213001, may 2016.

\bibitem{MacShaTanFra15}
Shai Machnes, Elie Ass\'emat, David Tannor, and Frank~K. Wilhelm.
\newblock Tunable, flexible, and efficient optimization of control pulses for
  practical qubits.
\newblock {\em Phys. Rev. Lett.}, 120:150401, Apr 2018.

\bibitem{MACPERROU22}
U.~A. {Maciel Neto}, P.~S. {Pereira da Silva}, and P.~Rouchon.
\newblock Motion planing for an ensemble of bloch equations towards the south
  pole with smooth bounded control.
\newblock {\em Automatica}, 145:110529, 2022.

\bibitem{L2}
Mazyar Mirrahimi.
\newblock Lyapunov control of a quantum particle in a decaying potential.
\newblock In {\em Annales de l'Institut Henri Poincare (C) Non Linear
  Analysis}, volume~26, pages 1743--1765. Elsevier, 2009.

\bibitem{Maz14}
Mazyar Mirrahimi, Zaki Leghtas, Victor~V Albert, Steven Touzard, Robert~J
  Schoelkopf, Liang Jiang, and Michel~H Devoret.
\newblock Dynamically protected cat-qubits: a new paradigm for universal
  quantum computation.
\newblock {\em New Journal of Physics}, 16(4):045014, 2014.

\bibitem{L3}
Mazyar Mirrahimi, Pierre Rouchon, and Gabriel Turinici.
\newblock Lyapunov control of bilinear schr\"{o}dinger equations.
\newblock {\em Automatica}, 41(11):1987--1994, 2005.

\bibitem{NieChu11}
Michael~A. Nielsen and Isaac~L. Chuang.
\newblock {\em Quantum Computation and Quantum Information: 10th Anniversary
  Edition}.
\newblock Cambridge University Press, USA, 10th edition, 2011.

\bibitem{PalaoK2002PRL}
Jos\'{e}~P. Palao and Ronnie Kosloff.
\newblock Quantum computing by an optimal control algorithm for unitary
  transformations.
\newblock {\em Phys. Rev. Lett.}, 89(18):188301--, October 2002.

\bibitem{PalaoK2003PRA}
Jos\'{e}~P. Palao and Ronnie Kosloff.
\newblock Optimal control theory for unitary transformations.
\newblock {\em Phys. Rev. A}, 68(6):062308--, December 2003.

\bibitem{L6}
Yu~Pan, V.~Ugrinovskii, and M.~R. James.
\newblock Lyapunov analysis for coherent control of quantum systems by
  dissipation.
\newblock In {\em 2015 American Control Conference (ACC)}, pages 98--103, July
  2015.

\bibitem{CODE_OCEAN_SMOOTH}
P.~S. Pereira~da Silva and P.~Rouchon.
\newblock {RIGA} and {FPA}, quantum control with smooth control pulses [source
  code].
\newblock {\em CODE OCEAN}, 2019.
\newblock https://doi.org/10.24433/CO.3293651.v2.

\bibitem{PerSilRou19}
Paulo~S. Pereira~da Silva, H.~Bessa~Silveira, and P.~Rouchon.
\newblock Fast and virtually exact quantum gate generation in u(n) via
  iterative lyapunov methods.
\newblock {\em International Journal of Control}, 0(0):1--15, 2019.

\bibitem{CRAB}
N.~Rach, M.~M. M\"uller, T.~Calarco, and S.~Montangero.
\newblock Dressing the chopped-random-basis optimization: A bandwidth-limited
  access to the trap-free landscape.
\newblock {\em Phys. Rev. A}, 92:062343, Dec 2015.

\bibitem{SchirF2011NJoP}
S~G Schirmer and Pierre de~Fouquieres.
\newblock Efficient algorithms for optimal control of quantum dynamics: the
  krotov method unencumbered.
\newblock {\em New Journal of Physics}, 13(7):073029--, 2011.

\bibitem{SepulSR2010}
R.~Sepulchre, A.~Sarlette, and P.~Rouchon.
\newblock Consensus in non-commutative spaces.
\newblock In {\em Decision and Control (CDC), 2010 49th IEEE Conference on},
  pages 6596--6601, 2010.

\bibitem{SilPerRou14}
H.~B. Silveira, P.~S. Pereira~da Silva, and P.~Rouchon.
\newblock Quantum gate generation by {T}-sampling stabilization.
\newblock {\em International Journal of Control}, 87(6):1227--1242, 2014.

\bibitem{SilPerRou16}
H.~B. Silveira, P.~S. Pereira~da Silva, and P.~Rouchon.
\newblock Quantum gate generation for systems with drift in u(n) using
  {L}yapunov-{L}asalle techniques.
\newblock {\em International Journal of Control}, 89(1):1--16, 2016.
\newblock DOI:10.1080/00207179.2016.1161830.

\bibitem{L1}
Naoki Yamamoto, Koji Tsumura, and Shinji Hara.
\newblock Feedback control of quantum entanglement in a two-spin system.
\newblock {\em Automatica}, 43(6):981 -- 992, 2007.

\bibitem{L5}
Jing Zhang, Yu-xi Liu, Re-Bing Wu, Kurt Jacobs, and Franco Nori.
\newblock Quantum feedback: theory, experiments, and applications.
\newblock {\em arXiv preprint arXiv:1407.8536}, 2014.

\end{thebibliography}
\def\cprime{$'$}

\appendix

\section{Numerical implementation }

The numerical integration of the systems and their adjoints considers
the equations of Section \ref{sAlgorithm}. All this is implemented as  matrix version of 4th-order Runge-Kutta (fixed step) method, which regards sums and multiplications of $n\times n$ matrices. We have implemented this in a MATLAB program, but it is clear that a GPU implementation could improve a lot the run-time of the method. As described in Section  \ref{sAlgorithm}, all the backwards integrated paths of the adjoint system  $J_\sigma(t_k), \sigma \in \Lambda$ for the discrete time $t_k = \frac{k T_f}{N_{sim}}, k=0, \ldots, N_{sim}$ must be saved in memory. For instance, in the simulation of the second example,  we have chosen $N_{sim}=1000$, where $N_{sim}$ is the number of Runge-Kutta steps, in order to ensure a good precision \footnote{As  said in Example 2, the numerical precision may be estimated by the numerical difference of the Lyapunov function of  the end of step $\ell$ and the one of the  of step $\ell+1$ which in theory must be exactly the same. The difference that is obtained is then caused by the error of the numerical integration.}.
For large dimensional systems, one may have a memory overflow. In order to avoid this, one could do the backwards integration of step $\sharp 2$ of the algorithm without saving all the data in memory, but only saving the ``final conditions'' $J_\sigma(0), \sigma \in \Lambda$, and then perform a forward integration using that data as initial conditions. The numerical stability is ensured for the adjoint system for the backwards integration because such system coincides with the Heisenberg point of view, and hence is stable (from a dynamic system perspective). However the forward integration of the adjoint system (which may be unstable from a dynamic system point of view) may present bad numerical properties in some examples. In order to overcome this, one may save for instance, the backwards integrated data $J_\sigma(t_k)$ for $k = 0, 10, 100, \ldots$ and  do the forward integration by reseting the initial conditions to these values for
$t_k, k=0, 10, 100, \ldots$. This will divide the memory space by a factor of $10$ and  will preserve the numerical precision as well.

As  a last remark, it is important to say that one may include input limitations in the algorithm. When there is no clock-control, one may do this by standard saturation of the control-law by its maximal  value.  It is clear that the seed input must also obey this restriction. When the clock-control is present, one may saturate the clock control ${\overline v}_0(t)$ in a standard way, but since the other controls ${\overline v}_k(t), k>0$ are virtual controls, the saturation of these controls must be corrected by  the factor ${1+ v_0(t)}$ that multiplies the real controls in order to compute the virtual controls. This is done in a way that the contribution of each virtual feedback ${\widetilde v}_k(t), k=0,1, \ldots, m$ to the derivative of the Lyapunov function is always non-positive.
Consider the saturation map:
\begin{equation}
\mbox{sat} (x, A, -B) = \left\{
\begin{array}{l}
\mbox{$x$, if $x \in [-B, A]$}\\
-B, \mbox{if $x < -B$}\\
A, \mbox{if $x  > A$}
\end{array}
\right.
\end{equation}
The saturated feedback that was used in the simulations of this work is of the form:
\begin{equation}
  \label{eSatControl}
{\widetilde v}_k(t) = \mbox{sat} (g_k F_k, A_k, -B_k), k=0,1, \ldots, m
\end{equation}
where ${\widetilde v}_k(t) = g_k F_k$ is the standard ``non-saturated'' feedback law, and
\[F_k =  \sum_{\sigma \in \Lambda} \trace (J_\sigma(\tau) \mathcal{L}_k (\rho_\sigma(\tau))),
k=0, 1, \ldots, m.
\]
Furthermore, $A_0=u^0_{max}$, $-B_0 = \max \{-u^0_{max}, -\frac{{\overline u}^{\ell-1} }{u_{k}^{max}}-1, \frac{{\overline u}^{\ell-1} }{u_{k}^{max}}-1, k=1, \ldots, m\}$ , and $A_k=B_k= (1+v_0) u_0^{max}, k=1, \ldots,m$.

This saturated control has a theoretical explanation given in the sequel. Recall
that the the control law is of the form (see Section \ref{sClock}) :\\
\[
 {\overline v}_k^\ell(\tau) = {\overline u}^{\ell-1}(\tau) + {\widetilde v}_k(\tau), k=1, \ldots, m
\]
It is easy to see that the derivative of the Lyapunov function is
\[
\frac{d}{d\tau} \mathcal V = -\sum_{k=0}^m {\widetilde v}_k(t) F_k(t)
\]
Then the following result holds:
\begin{proposition}
Assume that $|{\overline u}^{\ell-1} | \leq u_{k}^{max}$ for $u_{k}^{max}>0, k=0, \ldots m$. Consider a control law of the form \eqref{eSatControl}  such that ${\widetilde v}_k(t) F_k \geq 0$, in any circumstances\footnote{The signal of $F_k(\tau)$ is not known a priori.} and
  $|{\overline u}^{\ell-1} | \leq u_{k}^{max}, k=0, \ldots m$. Suppose that a given ${\widetilde v}_0$ maximises the product ${\widetilde v}_0(\tau) F_0(\tau)$. Assume that the other virtual feedback  ${\widetilde v}_k(\tau), k=1, \ldots, m$
  minimizes  $\frac{d\mathcal V}{d\tau}$ for this given $v_0$. Then $A_0=u^0_{max}$, $B_0 = - \max \{-u^0_{max}, -\frac{{\overline u}^{\ell-1} }{u_{k}^{max}}-1, \frac{{\overline u}^{\ell-1} }{u_{k}^{max}}-1 , k=1, \ldots , m\}$ ,  $A_k=(1+v_0) u_k^{max}-{\overline u}^{\ell-1}$,
  and $B_k = (1+v_0) u_k^{max}+{\overline u}^{\ell-1}$,  $k=1, \ldots,m$.
\end{proposition}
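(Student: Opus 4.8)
The plan is to reduce the assertion to an elementary, sequentially coupled box-optimization. The first step is to translate the amplitude constraint on the real control generated at iteration $\ell$, namely $|\overline u^{\,\ell}_k(\tau)|\le u_k^{\max}$ for $k=1,\dots,m$, into a constraint on the virtual feedback $\widetilde v_k$. By construction of the clock algorithm $\overline v_0\equiv 0$, so the new clock input is $v_0^\ell=\widetilde v_0$, and $\overline u^{\,\ell}_k=\overline v^{\,\ell}_k/(1+\widetilde v_0)=(\overline u^{\,\ell-1}_k+\widetilde v_k)/(1+\widetilde v_0)$ with $1+\widetilde v_0>0$. Hence $|\overline u^{\,\ell}_k|\le u_k^{\max}$ is equivalent to
\[
-(1+\widetilde v_0)\,u_k^{\max}-\overline u^{\,\ell-1}_k \;\le\; \widetilde v_k \;\le\; (1+\widetilde v_0)\,u_k^{\max}-\overline u^{\,\ell-1}_k ,
\]
which already displays the endpoints $A_k=(1+v_0)u_k^{\max}-\overline u^{\,\ell-1}$ and $-B_k=-(1+v_0)u_k^{\max}-\overline u^{\,\ell-1}$ claimed for $k\ge1$, once $v_0=\widetilde v_0$ is fixed.

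Next I would invoke the identity $\frac{d}{d\tau}\mathcal V=-\sum_{k=0}^m\widetilde v_kF_k$ derived just before the statement. For a fixed $\widetilde v_0$ the term $-\widetilde v_0F_0$ is frozen and the remaining sum is separable in $\widetilde v_1,\dots,\widetilde v_m$; minimising $\frac{d}{d\tau}\mathcal V$ over the box above thus amounts to maximising each $\widetilde v_kF_k$ on its interval, which is attained at the upper endpoint when $F_k>0$ and at the lower endpoint when $F_k<0$ — precisely what $\mathrm{sat}(g_kF_k,A_k,-B_k)$ with gain $g_k>0$ encodes, and which also makes the sign condition $\widetilde v_kF_k\ge0$ automatic, provided $A_k>0$ and $B_k>0$. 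Those two positivity conditions read $(1+\widetilde v_0)u_k^{\max}>|\overline u^{\,\ell-1}_k|$, i.e.
\[
\widetilde v_0 \;>\; \frac{|\overline u^{\,\ell-1}_k|}{u_k^{\max}}-1 \;=\; \max\Big\{\frac{\overline u^{\,\ell-1}_k}{u_k^{\max}}-1,\;\frac{-\overline u^{\,\ell-1}_k}{u_k^{\max}}-1\Big\},\qquad k=1,\dots,m .
\]

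The third step handles the clock feedback. Its intrinsic admissibility is $|v_0^\ell|=|\widetilde v_0|<1$, enforced through the design bound $|\widetilde v_0|\le u_0^{\max}$ with $u_0^{\max}<1$; combined with the lower bounds of the previous display, the admissible set of $\widetilde v_0$ is the interval with upper end $u_0^{\max}$ and lower end $\max\{-u_0^{\max},\;\frac{\overline u^{\,\ell-1}_k}{u_k^{\max}}-1,\;\frac{-\overline u^{\,\ell-1}_k}{u_k^{\max}}-1:\;k=1,\dots,m\}$. Since $\widetilde v_0$ is required to maximise $\widetilde v_0F_0$ — equivalently, to contribute as negatively as possible to $\frac{d}{d\tau}\mathcal V$ — it is pushed to the upper end $u_0^{\max}=:A_0$ when $F_0>0$ and to the lower end when $F_0<0$, which is exactly $-B_0=\max\{-u_0^{\max},\;\frac{\overline u^{\,\ell-1}_k}{u_k^{\max}}-1,\;\frac{-\overline u^{\,\ell-1}_k}{u_k^{\max}}-1\}$. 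Substituting the resulting $v_0$ into the endpoints obtained in the first step gives $A_k=(1+v_0)u_k^{\max}-\overline u^{\,\ell-1}$ and $B_k=(1+v_0)u_k^{\max}+\overline u^{\,\ell-1}$, which is the claim; the hypothesis $|\overline u^{\,\ell-1}|\le u_k^{\max}$ also propagates to $|\overline u^{\,\ell}|\le u_k^{\max}$ directly from the construction.

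I expect the only genuinely delicate point to be the sequential coupling between the two stages: the clock feedback $\widetilde v_0$ must be chosen \emph{before} $\widetilde v_1,\dots,\widetilde v_m$, yet the feasibility of the latter — in particular the non-degeneracy of each box $[-B_k,A_k]$, which is what keeps the sign conditions $\widetilde v_kF_k\ge0$ achievable — depends on $\widetilde v_0$; reconciling this with the maximisation of $\widetilde v_0F_0$ is exactly why $-B_0$ must be the maximum of $-u_0^{\max}$ together with the $2m$ quantities $\pm\overline u^{\,\ell-1}_k/u_k^{\max}-1$, rather than simply $-u_0^{\max}$. Everything else is bookkeeping with the monotone, affine-in-control structure already used throughout the paper.
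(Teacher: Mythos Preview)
Your argument is correct and follows essentially the same route as the paper's proof: translate the real-control bound $|\overline u^{\,\ell}_k|\le u_k^{\max}$ through the relation $\overline u^{\,\ell}_k=(\overline u^{\,\ell-1}_k+\widetilde v_k)/(1+v_0)$ to obtain the box for $\widetilde v_k$, then use the sign requirement $\widetilde v_kF_k\ge 0$ (zero must lie in $[-B_k,A_k]$) to force the lower bounds on $v_0$, and finally intersect with $|v_0|\le u_0^{\max}$. The paper's version is terser---it stops at the implications $v_0\ge \pm\overline u^{\,\ell-1}/u_k^{\max}-1$ and declares the rest ``straightforward and left to the reader''---whereas you spell out the separability of the minimisation in $\widetilde v_1,\dots,\widetilde v_m$ and the sequential coupling that explains why $-B_0$ is a maximum over $2m+1$ quantities; this extra detail is helpful but not a different method.
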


\begin{proof} Since ${\widetilde v}_k(\tau) F_k(\tau) \geq 0$ for all $k =0, \ldots, m$   in all circumstances without knowing the signal of $F_k (\tau)$ \emph{a priori}, this means that the intervals $[-B_k, A_k]$ must contain zero, otherwise we cannot choose freely the signal of ${\widetilde v}_k(\tau) F_k(\tau)$ when the algorithm is running. As
${\overline u}^\ell = \frac{{\overline u}^{\ell-1}(\tau) + {\widetilde v}_k(\tau)}{1+v_0}$,
and ${\overline u}^\ell \leq u_k^{max}$ this means that ${\widetilde v}_k \leq (1+v_0)  u_k^{max} - {\overline u}^{\ell-1} =A_k \geq 0$ and  ${\widetilde v}_k \geq -(1+v_0)  u_k^{max} - {\overline u}^{\ell-1} =-B_k \leq 0$. This implies that $v_0 \geq \frac{{\overline u}^{\ell-1} }{u_{k}^{max}}-1$ and  $v_0 \geq \frac{-{\overline u}^{\ell-1} }{u_{k}^{max}}-1$. The rest of the statement is straightforward and is left to the reader.
\end{proof}

\end{document}